\documentclass[11pt]{article}

\usepackage{amssymb} 
\usepackage{amsmath}     
\usepackage{amsthm}
\usepackage{todonotes}
\usepackage{mathtools}
\usepackage{a4wide}
\usepackage{graphicx}
\usepackage{hyperref}
\usepackage{caption}
\usepackage{subcaption}
\usepackage[table]{xcolor}
\usepackage{multirow}

\usepackage[ruled,vlined,linesnumbered]{algorithm2e}
\SetKw{KwNot}{not}
\SetKw{KwOR}{or}
\SetKw{KwAND}{and}
\SetKw{KwExitFor}{exit for-loop}
\SetKw{KwStep}{Step}

\usepackage{authblk}

\usepackage{natbib}

\newcommand{\Rset}{\mathbb{R}}
\newcommand{\Xset}{\mathbb{X}}

\newcommand{\Zset}{\mathbb{Z}}

\newtheorem{thm}{Theorem}

\newtheorem{cor}{Corollary}
\newtheorem{prop}{Proposition}
\newtheorem{rem}{Remark}

\newcommand*{\theadl}[1]{\multicolumn{1}{l}{\textbf{#1}}}

\begin{document}

\title{Single-item  lot sizing problem under budgeted lead-time uncertainty}

\author[1]{Romain Guillaume}
\author[2]{Adam Kasperski\footnote{Corresponding author}}
\author[2]{Szymon Wróbel}
\author[2]{Pawe{\l} Zieli\'nski}

\affil[1]{Universit{\'e} de Toulouse-IRIT, Toulouse, France,
\texttt{romain.guillaume@irit.fr}}

\affil[2]{
Wroc{\l}aw  University of Science and Technology, Wroc{\l}aw, Poland\\
            \texttt{\{adam.kasperski, szymon.wrobel, pawel.zielinski\}@pwr.edu.pl}}

\date{}

\maketitle

 \begin{abstract}
In this paper, a single-item  lot sizing problem with backordering is discussed. 
The time horizon is divided into planning periods, characterized by 
fixed and variable production costs, and future delivery periods with 
specified demands, where inventory holding and backordering costs 
may occur. For each planning period, a common nominal lead time is given.
The true lead times can deviate to some extent from the nominal one, and their exact values are unknown at the planning step.  
We assume that lead times take only integer values and splitting production orders is not allowed. Furthermore, order crossovers are prohibited; thus, an order placed earlier cannot arrive after one placed later.
 A budgeted uncertainty set of possible lead-time scenarios is defined, where a budget allows us to control the amount of uncertainty of 
lead times. It is shown how to construct a family of production plans varying from the most optimistic (a best lead-time scenario occurs) to the most pessimistic (a worst lead-time scenario occurs). In order to compute these plans the \textsc{R*} criterion is applied which generalizes the conservative robust min-max criterion, commonly used in robust optimization. The computational complexity of the problem is investigated. Polynomial, pseudopolynomial time algorithms, and  mixed integer programming formulations are proposed to solve the general problem and its special cases.  The results of computational tests are provided that demonstrate that using the \textsc{R*} criterion can significantly enlarge the set of candidate production plans.

  \end{abstract}
  
\noindent \textbf{Keywords}: dynamic lot sizing, lead-time uncertainty, robust optimization, production planning

\section{Introduction}

The single-item  lot-sizing problem is a basic production and distribution planning model in which there is a demand for a single product over a planning horizon. This demand should be satisfied by placing production orders in previous periods. The time from placing an order to receiving the delivery is called a lead time. 
By incorporating lead times, we obtain a delivery plan that may incur inventory holding and backorder costs.
The basic single-item  lot sizing problem was introduced by~\cite{WW58} and a description of its different variants can be found in the survey by~\cite{BADN17} or in~\cite{WP06}. In this paper, we consider a version with backordering in which the demand can be  temporarily unmet and later satisfied.   

 In the real world, the  lead times are often uncertain and may not be fully known at the time of planning. 
 A simple model of uncertainty assumes that the lead times can take their values within some ranges. To control the amount of uncertainty, a budget can be introduced that limits the number of lead times that differ from the nominal ones (or limits the total deviation from the nominal lead times). The concept of budgeted uncertainty was introduced by~\cite{BS04} and has been widely adopted in robust optimization. Applying the robust optimization framework~(see, e.g.~\cite{BN09}, \cite{KY97}), we can 
identify production plans that minimize the total production, inventory 
holding, and backordering costs under a worst-case lead-time scenario.
 The application of the robust min-max approach to lead-time uncertainty in lot-sizing models has already been explored in the literature. In~\cite{TA17} a single-item lot-sizing problem with uncertain lead times and uncertain demands was discussed. 
 A budgeted uncertainty set for lead times was constructed. A relaxed version of this set was considered, which means that partial orders can be delivered in subsequent periods.  
 Taking into account both demand and lead-time uncertainty, the robust min-max problem was solved by a Bender's decomposition algorithm. In~\cite{TON22} a variant of the problem with many suppliers was investigated. Again, a budgeted uncertainty set for uncertain lead times was constructed. The authors in~\cite{TON22} introduced three budgets that limit the maximum number of late orders, the number of late deliveries, and the total lateness. 
 However, as in~\cite{TA17}, a relaxed version was considered in which an order can be split and arrive at more than one future period. Several solution algorithms were proposed and tested to solve the problem. Yet another robust approach to lead-time uncertainty was proposed by~\cite{HA17}, where the uncertain lead times were modeled by providing a discrete uncertainty set containing possible lead-time realizations (scenarios). 
 
 In this paper, we divide the time horizon into planning periods, where 
fixed and variable production costs are incurred, and future delivery 
periods with specified demands. We assume a common nominal lead time 
across all planning periods.
This assumption can be satisfied, for example, for a short planning horizon, when the orders come from one supplier.
We consider a simple uncertainty model in which a minimum and a maximum deviation from the nominal lead time are specified for each planning period. Hence, a delivery can be both late or early.
The true lead time can take discrete values within a given range. 
In contrast to the models discussed by~\cite{TA17} and  \cite{TON22}, order 
splitting is not allowed; thus, each order must arrive in its entirety 
at a single future period. Furthermore, we assume that order crossovers 
are prohibited, meaning that an order placed earlier cannot arrive after the one 
placed later. We show that this realistic assumption is crucial and leads to efficient algorithms or more compact mixed integer programming reformulations for the problem.
We define a restricted lead-time uncertainty set that satisfies the 
aforementioned assumptions by introducing two types of budget. 
The first budget constrains the number of lead times that deviate from 
their nominal values, while the second limits the total deviation 
of all lead times from the nominal ones.

The robust min-max criterion is often regarded as too conservative. 
In practice, it is often beneficial to provide decision makers with 
a broader spectrum of solutions, ranging from optimistic to pessimistic.
 In~\cite{FG20} the so-called \textsc{R*} criterion was proposed. The idea is to introduce a cost budget $B$ that represents a maximum amount that the decision-maker is willing to pay to implement a solution. We then seek a solution that minimizes the minimum cost in the lead-time scenarios, subject to the condition that its maximum total cost does not exceed $B$. For small values of $B$ we get the traditional min-max robust problem. Choosing larger values of $B$ we obtain more optimistic solutions that take into account good lead-time scenarios.  The ultimate choice depends on some outer factors, such as decision-maker risk aversion. 
This paper is a significantly expanded version of the extended abstract by~\cite{GKZ24}.

This paper contains the following new results. We show that for a given production plan, a best and a worst lead-time scenario can be found in polynomial time. In the absence of setup costs, the robust min-max problem can be solved in polynomial time using a linear programming reformulation (for arbitrary setup costs, a mixed integer programming model exists). On the other hand, a best (optimistic) production plan can be found in pseudopolynomial time for arbitrary setup costs. 
We prove that the problem with the \textsc{R*}~criterion is NP-hard even in the 
special case without backordering and with equal production capacities. 
Note that the deterministic version of this problem is solvable in 
polynomial time (see~\cite{FLK80}). 
To solve the problem with the  \textsc{R*}~criterion, a mixed integer programming formulation is constructed.

The paper is organized as follows. In Section~\ref{sec:problemform} we recall the formulation of the deterministic single-item lot sizing problem. Then, we introduce two budgeted uncertainty sets for lead times and formulate several optimization problems for computing the pessimistic, optimistic and  \textsc{R*} production plans. We also show that a special case of the problem with the  \textsc{R*}~criterion is NP-hard. In Section~\ref{sec:solving}, we propose polynomial and pseudopolynomial 
time algorithms, as well as mixed-integer programming formulations, to 
solve the general \textsc{R*}~problem and its special cases. We show 
that computing the best and worst lead-time scenarios can be performed 
in polynomial time by solving a variant of the constrained shortest 
(longest) path problem. Furthermore, we develop methods for computing 
pessimistic, optimistic, and \textsc{R*}~production plans. 
Section~\ref{sec:tests} presents the results of computational 
experiments, where the obtained production plans are evaluated using a
Monte Carlo simulation. We demonstrate how the choice of problem 
parameters influences the distribution of solution costs. We also show that the \textsc{R*} criterion can significantly enlarge the set of candidate production plans.

\section{Problem formulation}
\label{sec:problemform}

In this section, we first recall a formulation of the deterministic single-item lot sizing problem. Then, we introduce an uncertainty model for lead times and formulate several optimization problems for computing the pessimistic, optimistic and  \textsc{R*}production plans.
We also establish NP-hardness of the \textsc{R*}~problem and
its inapproximability, if the assumption on order crossovers is relaxed.

\subsection{Deterministic single-item   lot-sizing problem}
\label{secdet}

We are given a set $[T]=\{1,\dots, T\}$  of \emph{planning periods} and a set $[T^+]=\{T+1,\dots, T^+\}$ of \emph{future periods}.  
For each future period $t\in [T^{+}]$, there is a \emph{demand}~$d_t\geq 0$. We are  given \emph{setup} and
 \emph{production} costs $c^S_t$, $c^P_t$ for each planning period $t\in [T]$, and \emph{inventory holding} and \emph{backorder} costs $c^I_t$, $c^B_t$ for each future period $t\in [T^+]$.  
  Let $\pmb{x}=(x_t)_{t\in [T]}$ be a \emph{production plan}, where $x_t\geq 0$ is the amount of production in the planning period $t\in [T]$. Let us denote by $\Xset$ the set of \emph{production plans}. In this paper, we assume that 
$$
\Xset=\{\pmb{x}\in \Rset^{T}_{+}: \;   0\leq x_{t}\leq C_{t}, \; t\in [T]\},                               
$$
where $C_t$ is the \emph{capacity} in period $t\in [T]$, i.e, the maximum amount of production in period~$t$.

The production plan $\pmb{x}=(x_t)_{t\in [T]}$ generates a \emph{procurement (delivery) plan} in the future periods by applying \emph{lead times} $(l_t)_{t\in [T]}$ to $\pmb{x}$, that is production $x_t$ in period $t\in [T]$ arrives at future period $t+l_t\in [T^+]$.
The lead times $l_t\in \Zset_{+}$, $t\in [T]$, are positive integers, so splitting of production orders is not allowed.
We assume that the production in period $t\in [T]$ cannot arrive after the production in the next period $(t+1)\in [T]$, that is, $t+l_t\leq (t+1)+l_{t+1}$. This assumption, called  \emph{no order crossover}, can be expressed as:
\begin{equation}
\label{assump}
l_t\leq l_{t+1}+1,\; t\in [T-1].
\end{equation}

The deterministic  single-item lot-sizing problem with lead times and backordering
can be modeled by the following mixed integer program, called \emph{aggregate formulation} (see, e.g.,~\cite{BADN17}):
\begin{align}
\min &
	\sum_{t\in [T]} (c_t^S y_t+c_t^P x_t)+
	\sum_{t\in [T^+]} (c^{I}_t I_{t}+ c^B_t B_{t})&\label{spp1}\\ 
\text{s.t.}
	& B_{t}- I_{t}=D_{t}-O_t & t \in [T^+] \label{spp2}\\
	& O_t=\sum_{i\in [T]: i+l_i\leq t} x_{i} & t\in [T^+] \label{spp3}\\
	& x_{t}\leq C_ty_t & t\in [T] \label{spp4}\\
	& B_{t},I_{t}\geq 0  & t\in [T^+] \label{spp5}\\
	& y_{t}\in \{0,1\}  & t\in [T] \label{spp6}\\
	& \pmb{x}\in \Xset\subseteq \Rset^{T}_{+}, \label{spp7}
\end{align}
where $D_{t}=\sum_{i\in [T^+]}d_{i}$ is
  the \emph{cumulative demand} up to period~$t\in [T^+]$, $I_{t}$, $B_{t}$  and $O_{t}$ are  the \emph{inventory level},  \emph{backorder level} and \emph{cumulative delivery}  in period~$t\in [T^+]$, respectively.
Equation~(\ref{spp3}) computes the cumulative delivery $O_t$ in $t\in [T^+]$ from the lead times and the production amounts in $[T]$. The constraints~(\ref{spp4}) express the fact that a non-zero production in period $t\in [T]$ incurs  the setup cost~$c_t^S$ in $t$.

\subsection{Uncertain lead times}
\label{sec:uncelead}

Assume that the true values of lead times are unknown at the planning step, when the production plan must be established. This uncertainty is modeled by a \emph{lead-time uncertainty set}, which contains possible realizations of lead times.
Let $\widehat{l}\in \Zset_{+}$,
$\widehat{l}\geq 1$, be a common \emph{nominal lead time}, which is the same for each planning period $t\in [T]$. For each period $t\in [T]$,  deviations $\zeta^+_t$, $\zeta^-_t$ from $\widehat{l}$ are specified. The values of  $\zeta^+_t$, $\zeta^-_t$ are nonnegative integers.
 Hence $l_t \in  \{l^{\min}_t,l^{\min}_t+1,\dots, l^{\max}_t\}$, $t \in [T]$, with $l^{\min}_t=\widehat{l}-\zeta^-_t$ and $ l^{\max}_t=\widehat{l}+\zeta^+_t $. 
 We assume that $t+l^{\min}_t>T$ and $t+l^{\max}_t\leq T^+$ for each $t\in [T]$, so production in period $t\in [T]$ must arrive at some future period in $[T^+]$.
 From now on, we assume that $l^{\min}_t\leq l^{\min}_{t+1}+1$ and $l^{\max}_t\leq l^{\max}_{t+1}+1$ for each $t\in [T-1]$, which follows from the no order crossover assumption~(\ref{assump}) about the lead times made in Section~\ref{secdet}. 
We will use $\pmb{\ell}=(l_1,\dots,l_t)$ to denote a \emph{lead-time scenario}, i.e. a particular realization of the lead times. 
 
 Let $\mathcal{L}\subset  \Zset_{+}^{T}$ be the \emph{lead-time uncertainty set}, where each scenario 
 $\pmb{\ell}=(l_1,\dots,l_t)\in \mathcal{L}$ satisfies 
 the  constraints:
 \begin{align}
 &  l_t \in \{l^{\min}_t,l^{\min}_t+1,\dots, l^{\max}_t\} & t\in [T],\label{clun1}\\
  & l_t\leq l_{t+1}+1 & t\in [T-1],\label{clun2}\\
  & l_t\in  \Zset_{+} & t\in [T]. \label{clun3}
\end{align}
In the following, we will consider two \emph{budgeted versions} of the uncertainty set $\mathcal{L}$:
\begin{align}
& \mathcal{L}(\Gamma^d)=\{\pmb{\ell}\in \mathcal{L}: |\{t\in [T]: l_t\neq \widehat{l}\}|\leq \Gamma^d\}, \label{dus}\\
& \mathcal{L}(\Gamma^c)=\{\pmb{\ell}\in \mathcal{L}: \sum_{t\in [T]} |l_t-\widehat{l}|\leq \Gamma^c\}. \label{cus}
\end{align}
In the uncertainty set $\mathcal{L}(\Gamma^d)$, the \emph{budget} $\Gamma^d\in\{0,\dots,T\}$ limits the number of lead times that are different from the nominal lead time $\widehat{l}$. This type of budgeted uncertainty was introduced by~\cite{BS04}. 
On the other hand, the \emph{budget} $\Gamma^c\in \Zset_{+}$ limits the total deviation of the lead times from $\widehat{l}$. Setting $\Gamma^d=\Gamma^c=0$, we get the nominal problem in which all lead times are equal to~$\widehat{l}$. After choosing $\Gamma^d\geq T$ or $\Gamma^c\geq T(T^+-T)$, the uncertainty sets $\mathcal{L}(\Gamma^c)$ and $\mathcal{L}(\Gamma^d)$ reduce to $\mathcal{L}$.

\subsection{Single-item lot sizing problem with uncertain lead times}
\label{sec:unclead}
 
Let $\mathcal{C}(\pmb{x},\pmb{\ell})$ be the cost of the production plan $\pmb{x}\in \mathbb{X}$ under the lead-time scenario $\pmb{\ell}\in \mathcal{L}(\Gamma)$, where $\Gamma\in \{\Gamma^d,\Gamma^c\}$. Hence, $C(\pmb{x},\pmb{\ell})$ is the optimal objective value of (\ref{spp1})-(\ref{spp7}) for a fixed production plan $\pmb{x}=(x_1,\dots,x_t)\in \mathbb{X}$ and lead-time scenario $\pmb{\ell}=(l_1,\dots,l_t)\in \mathcal{L}(\Gamma)$. 
We will compute an \emph{optimistic production plan} $\pmb{x}^{o}$ by solving the problem
\begin{equation}
\label{probopt}
\textsc{Min-Min}:\; \pmb{x}^{o}\in \arg \min_{\pmb{x}\in \Xset}\min_{\pmb{\ell}\in \mathcal{L}(\Gamma)} \mathcal{C}(\pmb{x},\pmb{\ell}).
\end{equation}
We will also compute a \emph{pessimistic (robust) production plan} $\pmb{x}^{p}$ by solving the traditional robust min-max problem
\begin{equation}
\label{probpes}
\textsc{Min-Max}:\;  \pmb{x}^{p}\in \arg \min_{\pmb{x}\in \Xset}\max_{\pmb{\ell}\in \mathcal{L}(\Gamma)} \mathcal{C}(\pmb{x},\pmb{\ell}).
\end{equation}
We will also be interested in solving the inner problems in~(\ref{probopt}) and~(\ref{probpes}), that is, computing a best (optimistic) and worst (pessimistic) lead-time scenario for a given production plan $\pmb{x}$:
\begin{equation}
\label{bleadprob}
 \pmb{\ell}^{p}\in \arg \max_{\pmb{\ell}\in \mathcal{L}(\Gamma)} \mathcal{C}(\pmb{x},\pmb{\ell}),
 \end{equation}
 \begin{equation}
 \label{wleadprob}
 \pmb{\ell}^{o}\in \arg \min_{\pmb{\ell}\in \mathcal{L}(\Gamma)} \mathcal{C}(\pmb{x},\pmb{\ell}).
\end{equation}
Observe that the interval $[\mathcal{C}(\pmb{x},\pmb{\ell}^o), \mathcal{C}(\pmb{x},\pmb{\ell}^p)]$ contains all possible values of the cost of $\pmb{x}$ and thus characterizes the uncertainty associated with $\pmb{x}$.

Production plans $\pmb{x}^o$ and $\pmb{x}^p$ represent two extreme attitudes towards a risk. To compute alternative production plans that represent a tradeoff between these two cases, another criterion should be applied. 
One possible choice is the well-known Hurwicz criterion (see, e.g.,~\cite{LR57}), which is defined as a convex combination of the minimum and maximum costs of $\pmb{x}$ over the set $\mathcal{L}(\Gamma)$. 
However, this approach has been criticized in the context of sequential decision-making (see, e.g.,~\cite{DFGT15}, \cite{FG20}). 
Indeed, production planning can be modeled as a sequential decision problem under uncertainty and represented as a decision tree
with production in periods as decision nodes and lead time scenarios as chance nodes. 
A solution is thus viewed as a strategy that assigns a decision (production amount) to each decision node.
Within this framework, the Hurwicz criterion fails to satisfy dynamic consistency; that is, the computed strategy can be nonoptimal 
 for a subset of the decision tree restricted to some node and all its descendants.
  Moreover, as highlighted by~\cite{FG20} and \cite{krug2020decision}, 
 the Hurwicz criterion suffers from another serious drawback, namely, it relies solely on a convex combination of the best and worst outcomes, it can only identify a limited subset of Pareto-optimal solutions.

In this paper we apply an alternative criterion, proposed by~\cite{FG20}, called \textsc{R*},
which preserves the essential properties of sequential decision processes while significantly expanding the set of reasonable candidate solutions.
For a description of its properties, in particular for its axiomatic characterization, we refer the reader to~\cite{FG20}. Let $B\in \Rset_{+}$ be a quantity, called a \emph{cost budget}, that represents the maximum amount the decision-maker is willing to pay for a production plan. We consider the following optimization problem:
\begin{equation}
\label{probr}
\text{\textsc{R*}}:\; 
	\begin{array}{lrl}
 		  \pmb{x}^r\in&  \displaystyle\arg \min_{\pmb{x}\in\Xset} & \displaystyle \min_{\pmb{\ell}\in \mathcal{L}(\Gamma_1)} \mathcal{C}(\pmb{x},\pmb{\ell}) \\
		   & \mathrm{s.t.} & \displaystyle \max_{\pmb{\ell}\in \mathcal{L}(\Gamma_2)} \mathcal{C}(\pmb{x},\pmb{\ell})\leq B,	
	\end{array}
\end{equation} 
where $\Gamma_1\in\{\Gamma^d,\Gamma^c\}$, $\Gamma_2\in\{\Gamma^d,\Gamma^c\}$.
We call an optimal solution~$ \pmb{x}^r$ to \eqref{probr} the \textsc{R*} production plan.
If the cost-budget constraint is not satisfied, i.e.
 $\max_{\pmb{\ell}\in \mathcal{L}(\Gamma_2)} \mathcal{C}(\pmb{x},\pmb{\ell})> B$ for each $\pmb{x}\in\Xset$, then we choose the pessimistic plan $\pmb{x}^p$ as the optimal solution. The problem~(\ref{probr}) generalizes~(\ref{probopt}) and~(\ref{probpes}). 
Indeed, we get~(\ref{probopt}) by choosing very large $B$ and~(\ref{probpes}) by choosing $B=0$. In~(\ref{probr}), we choose an optimistic  solution in a limited set of production plans, where the plans with high worst-case costs are excluded.
The values of the budgets $\Gamma_1$ and $\Gamma_2$ can be different. In one natural case, we can set $\Gamma_1=0$, so we solve the nominal problem with the cost-budget constraint. However, decision-makers may seek for more opportunities by choosing a positive value for~$\Gamma_1$.

The $\text{\textsc{R*}}$ problem can be viewed as a bi-criterion optimization problem. The first criterion, which accounts for optimistic lead-time scenarios, is optimized, while the second, focusing on pessimistic lead-time scenarios, is constrained by a given threshold $B$.
 By considering various combinations of $\Gamma_1$, $\Gamma_2$ and $B$, we obtain a family of production plans that can be offered to decision makers. In Section~\ref{sec:tests} we show how to evaluate these production plans. We also demonstrate
  that the \textsc{R*} criterion can lead to better production plans than the traditional robust min-max approach.

It is worth pointing out
that for the deterministic problem (\ref{spp1})-(\ref{spp7}), there is always an integral optimal production plan, provided that all demands and capacities are integral. Indeed,
if we fix the optimal values for binary variables $y_t$, $t\in [T]$,  then an optimal production plan can be computed by solving 
the corresponding 
 min-cost flow problem (see, e.g.,~\cite{AMO93}) that always has an integral optimal solution in this case. 
 In Section~\ref{secoptymplan}, we show that the same property holds for problem~(\ref{probopt}), so there is always an integral optimistic production plan.  However, while computing a pessimistic or \textsc{R*} production plan by solving~(\ref{probpes}) or~(\ref{probr}), respectively, it can be profitable to use fractional solutions, which is demonstrated by a sample instance shown in Figure~\ref{fig:fract}.
\begin{figure}[h!t]
	\centering
	\includegraphics[scale=0.9]{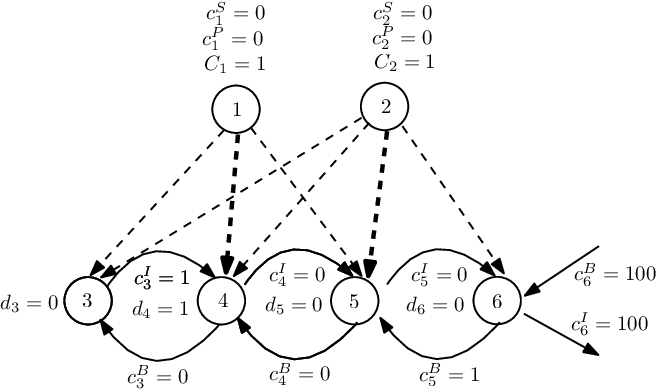}
	\caption{An instance with $T=2$, $T^+=6$, $\hat{l}=\pmb{3}$,  $\zeta^+_1=\zeta^-_1=1$, 
		$\zeta^+_2=1$, $\zeta^-_2=2$,
	$l_1\in\{2,\pmb{3},4\}$, $l_2 \in\{1,2,\pmb{3},4\}$, $\Gamma^d=1$.}
	\label{fig:fract}
\end{figure}

There are four integral production plans: $\pmb{x}^{(1)}=(0,0)$, $\pmb{x}^{(2)}=(0,1)$, $\pmb{x}^{(3)}=(1,0)$, $\pmb{x}^{(4)}=(1,1)$. The plans $\pmb{x}^{(1)}$ and $\pmb{x}^{(4)}$ are very expensive due to the large inventory holding and backorder costs in period~6. For the production plan $\pmb{x}^{(3)}$ the worst lead-time scenario is $(2,3)$ which yields the cost~1 (inventory cost in period~3 occurs). For the production plan $\pmb{x}^{(2)}$ the worst lead-time scenario is $\pmb{\ell}=(3,4)$ which also yields the cost~1 (backorder cost in period 5 occurs). However, for the production plan $(0.5, 0.5)$, the cost under any lead-time scenario is at most $0.5$ (notice that $l_1=3$ or $l_2=3$, because $\Gamma^d=1$). Therefore, it is profitable to diversify the production among the periods to obtain a more robust plan. This fact also means that the assumption about the integrality of production plans (often reasonable in practice) is essential in problems~(\ref{probpes}) and~(\ref{probr}).
\begin{rem}
\label{remint}
Problems~(\ref{probpes}) and~(\ref{probr}) do not have the integrality property.
\end{rem}

\subsection{The computational complexity of the \textsc{R*} problem}

The deterministic lot-sizing  problem (\ref{spp1})-(\ref{spp7}) is NP-hard (see~\cite{florian1971deterministic}, \cite{FLK80}).  In what follows, the problems~(\ref{probopt}), (\ref{probpes}) and (\ref{probr}) are also NP-hard, because we get the deterministic problem by fixing 
$\Gamma^{d}=0$  or $\Gamma^{c}=0$ and sufficiently large the cost budget~$B$. 
However, the version of the deterministic problem  (\ref{spp1})-(\ref{spp7}) without backordering ($c^B_t=+\infty$ for each $t\in [T^+]$) and equal capacities ($C_t=c$ for every $t\in [T]$) is polynomially solvable (\cite{florian1971deterministic}, \cite{L73}). The following theorem demonstrates that the corresponding problem with the \textsc{R*} criterion becomes NP-hard, even for 
the lead-time uncertainty set~$\mathcal{L}$ (see~(\ref{clun1})--(\ref{clun3})).
 \begin{thm}
 \label{thmcompl}
Problem~(\ref{probr}) (the \textsc{R*}~problem) without backordering and equal capacities  is NP-hard, under the lead-time uncertainty set~$\mathcal{L}$.
 Furthermore, if the integrality assumption is imposed on production plans, then the problem remains NP-hard even if all setup costs are equal to~0.
  \end{thm}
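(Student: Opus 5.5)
We will reduce from a weakly NP-hard number problem, the natural candidate being \textsc{Partition} (or \textsc{Subset Sum}). Given positive integers $a_1,\dots,a_n$ with $\sum_i a_i=2A$, we build an instance of~(\ref{probr}) with uncertainty set $\mathcal{L}$, no backordering ($c^B_t=+\infty$) and common capacity $C_t=c$. The idea is to let the R* criterion encode two conflicting requirements. The constraint $\max_{\pmb{\ell}\in\mathcal{L}}\mathcal{C}(\pmb{x},\pmb{\ell})\le B$ is a robustness requirement. The objective $\min_{\pmb{\ell}}\mathcal{C}(\pmb{x},\pmb{\ell})$ is an optimistic cost. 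The worst case under the uncertainty set forces a minimum amount to have been delivered by some future period. The best case rewards having as much as possible arrive early, or alternatively penalizes overproduction. A plan meeting both bounds exactly should then correspond to a subset of the $a_i$ summing to $A$.

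\textbf{Construction.} We use planning periods $i\in[n]$, one per item, with production cost $c^P_i$ and setup cost $c^S_i$ chosen proportional to $a_i$. To keep equal capacities we let $C_t=c$ be large. We then enforce that the only sensible production amount in period $i$ is $0$ or $a_i$, by making the unit production cost very high relative to the setup cost. Equivalently, the cost per unit of being "used" is fixed, so that the setup pattern $y$ selects a subset. The future demand is concentrated in one or two future periods. Periods with $l_t$ fixed (zero deviation) carry guaranteed demand coverage. The items' periods get wide ranges $[l^{\min}_t,l^{\max}_t]$ consistent with the no-crossover monotonicity assumptions on $l^{\min},l^{\max}$.

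\textbf{Effect of the two scenarios.} In the worst scenario, all selected orders arrive as late as possible. Since backorders are forbidden, the delivered quantity before the demand period must be at least $A$, which gives $\sum_{i\in S}a_i\ge A$. Large holding costs in the late periods, together with the budget $B$, bound the total produced, so $\sum_{i\in S}a_i\le A$ on the cost side. In the best scenario, orders arrive early and incur holding costs proportional to the amount. The objective threshold then certifies equality.

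\textbf{Integral version with zero setup costs.} Here the 0/1 choice is no longer produced by setups, so it must come from integrality. We set each period's capacity to $1$ after scaling the items. We represent item $a_i$ by a block of $a_i$ consecutive planning periods that are forced to act together. No-crossover enforces this: if the lead time of the first period in a block is pushed late, the later periods in the block must follow, because $l_t\le l_{t+1}+1$. The integrality of $x_t\in\{0,1\}$ then makes the choice combinatorial. We need integrality here because, by Remark~\ref{remint}, fractional plans could split mass and dodge the worst case, as in Figure~\ref{fig:fract}.

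\textbf{Main obstacle.} The hardest step is designing the lead-time ranges and the cost vectors so that the worst and best scenarios over the full set $\mathcal{L}$ are exactly the "all-late" and "all-early" ones for every plan. We also must exclude intermediate plans, such as partial amounts or mixed arrivals, that could beat the thresholds. We would first try instances where every $l^{\min}_t$ lands in a single early future period and every $l^{\max}_t$ in a single late one. This makes both scenario problems trivial to evaluate, and it is compatible with the monotonicity conditions. After that, we only need to verify the two inequalities in the "yes" and "no" directions.
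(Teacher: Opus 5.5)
Your proposal has a genuine gap. It never fixes an instance, and the mechanisms it does commit to cannot work.

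\textbf{The first part.} With a common capacity $C_t=c$, the item sizes cannot be carried by the production quantities. Nothing in your construction forces $x_i\in\{0,a_i\}$: a large unit production cost only pushes quantities down, and with $c$ large the planner may put all of $A$ into a single period. So the setup pattern selects no meaningful subset.

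Your description of the scenarios is also inconsistent. A minimizing scenario would never send orders early to pay holding costs when it can delay them. With $c^B_t=+\infty$, a ``late'' worst case acts only through an infinite backorder cost. That turns the budget constraint into hard linear covering conditions on $\pmb{x}$, not into a condition on a chosen subset. What you call the ``main obstacle'' is in fact the whole content of the proof.

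The missing idea is to put the numbers into the \emph{cost coefficients} and to make both extreme scenarios independent of the plan. The paper reduces from \textsc{Balanced Partition}, with sorted $a_1\le\dots\le a_n$ and $S=\frac12\sum_i a_i$. The instance is:
\begin{itemize}
\item $T=n$, $T^+=2n+1$, $C_t=1$, $c^P_t=0$, $c^S_t=a_t$;
\item a single demand $d_{2n+1}=n/2$ and lead times $l_t\in\{n,\dots,2n+1-t\}$;
\item telescoping holding costs $c^I_{n+i}=2(a_{i+1}-a_i)$ for $i\in[n-1]$, $c^I_{2n}=2(S-a_n)$, $c^I_{2n+1}=1$.
\end{itemize}
In this instance backorders never occur. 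For every plan, the best scenario sends everything to period $2n+1$, with cost $\sum_t a_ty_t$. The worst scenario is the all-nominal one, with cost $\sum_t a_ty_t+\sum_t 2(S-a_t)x_t$.

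With $B=(n-1)S$, the problem becomes $\min\sum_t a_ty_t$ subject to $2\sum_t a_tx_t-\sum_t a_ty_t\ge S$, $\sum_t x_t=n/2$, $0\le x_t\le y_t$. Its optimum is always at least $S$, and equals $S$ exactly when a balanced partition exists. The cardinality condition comes from the unit capacities, which is why the balanced version of the problem is needed.

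\textbf{The integral zero-setup part.} This part has two separate errors. First, replacing $a_i$ by a block of $a_i$ unit-capacity periods makes the instance size $\sum_i a_i$, which is pseudopolynomial. Since \textsc{Partition} is only weakly NP-hard, such a reduction proves nothing; you would need a strongly NP-hard source problem. Second, the no-crossover condition $l_t\le l_{t+1}+1$ restricts the lead times, which the scenario chooses, not the planner's quantities. So it cannot force the $x_t$ inside a block to be equal.

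In the paper this part is a one-line change to the same instance: set $c^P_t=a_t$ and $c^S_t=0$. With $C_t=1$ and integrality, $x_t\in\{0,1\}$ takes the role of $y_t$. The same computation then gives $\min\sum_t a_tx_t$ subject to $\sum_t a_tx_t\ge S$ and $\sum_t x_t=n/2$.
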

 \begin{proof}
See Appendix~\ref{dod}.
\end{proof}
From Theorem~\ref{thmcompl}, we  immediately get the following corollary.
\begin{cor}
\label{corcompl}
Problem~(\ref{probr}) (the \textsc{R*}~problem) without backordering and equal capacities   is NP-hard, under 
the budgeted uncertainty sets $\mathcal{L}(\Gamma^d)$ and $\mathcal{L}(\Gamma^c)$.
 Furthermore, if the integrality assumption is imposed on production plans, then the problem remains NP-hard even if all setup costs are equal to~0.
\end{cor}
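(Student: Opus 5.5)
The plan is a direct reduction from Theorem~\ref{thmcompl}: an instance of the \textsc{R*}~problem under the unrestricted set $\mathcal{L}$ is turned into an equivalent instance under a budgeted set by choosing the budget so large that the budget constraint never binds. All other data stay the same: demands, costs, capacities, $\widehat{l}$, $\zeta^{\pm}_t$, and the cost budget $B$. The instances produced in the proof of Theorem~\ref{thmcompl} have no backordering and equal capacities, and in the integral variant zero setup costs. The transformation keeps these features, so NP-hardness carries over in both forms stated in the corollary.

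For $\mathcal{L}(\Gamma^d)$, set $\Gamma_1=\Gamma_2=\Gamma^d:=T$. The number of periods with $l_t\neq\widehat{l}$ is at most $T$, so the cardinality condition in~(\ref{dus}) holds for every $\pmb{\ell}\in\mathcal{L}$, and hence $\mathcal{L}(T)=\mathcal{L}$. For $\mathcal{L}(\Gamma^c)$, set $\Gamma_1=\Gamma_2=\Gamma^c:=T(T^+-T)$. Every scenario in $\mathcal{L}$ satisfies $t+l_t\in[T^+]$ and $t+\widehat{l}\in[T^+]$, since $\widehat{l}$ lies between $l^{\min}_t$ and $l^{\max}_t$. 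Therefore $|l_t-\widehat{l}|\le T^+-(T+1)<T^+-T$ for each $t\in[T]$, so the total deviation in~(\ref{cus}) is at most $T(T^+-T)$. This gives $\mathcal{L}(T(T^+-T))=\mathcal{L}$, as already noted after~(\ref{cus}). One could also mix the two set types for $\Gamma_1$ and $\Gamma_2$, since each coincides with $\mathcal{L}$.

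With these budgets, the inner minimum and maximum in~(\ref{probr}) range over exactly the same scenario set as in the instance of Theorem~\ref{thmcompl}. So the feasible plans, their objective values, and the optimal plans are identical, including the fallback rule that selects $\pmb{x}^p$ when no plan meets the cost budget.

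The only point needing care, and the main (mild) obstacle, is that the reduction must be polynomial. The chosen budgets must have polynomial encoding size in the original instance. This holds because the input lists a demand $d_t$ for every $t\in[T^+]$, so $T^+$ is bounded by the input length. Hence $T$ and $T(T^+-T)$ are polynomially bounded integers, written with $O(\log T^+)$ bits, and the transformation runs in polynomial time.
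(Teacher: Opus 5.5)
Your proposal is correct and follows the paper's own argument: choose $\Gamma^d = T$ and $\Gamma^c = T(T^+-T)$ so that $\mathcal{L}(\Gamma^d)$ and $\mathcal{L}(\Gamma^c)$ coincide with $\mathcal{L}$, then invoke Theorem~\ref{thmcompl} for both parts of the statement. Your explicit bound $|l_t-\widehat{l}|\le T^+-T-1$ and the check that the chosen budgets have polynomial encoding size supply details the paper leaves implicit.
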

Corollary~\ref{corcompl} results from the fact that 
the budgeted uncertainty sets $\mathcal{L}(\Gamma^d)$ and $\mathcal{L}(\Gamma^c)$
 sets reduce to $\mathcal{L}$ for sufficiently large budgets $\Gamma^c$ and $\Gamma^d$ (see Section~\ref{sec:uncelead}). It is worth pointing out that the above hardness results hold if the order 
crossovers are not allowed (see~(\ref{clun2})). However, if we drop this 
assumption in the definition of the lead-time uncertainty set~$\mathcal{L}$ 
(see~(\ref{clun1})--(\ref{clun3})), then the \textsc{R*}~problem becomes 
significantly harder. This fact follows from the next theorem.
 \begin{thm}
 \label{thmminmin}
The problem of computing a best (optimistic) lead-time scenario~$\pmb{\ell}^{o}$ for a  given production plan $\pmb{x}$
(see the problem~(\ref{wleadprob})) with order crossovers, without backordering and without setup costs
is NP-hard and inapproximable, 
unless $\text{P} \neq \text{NP}$.
  \end{thm}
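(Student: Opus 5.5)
Without backordering, a scenario is feasible (finite cost) only if every demand is met on time, i.e. $O_t\geq D_t$ for all $t\in[T^+]$. For a fixed plan $\pmb{x}$ with no setup costs, the production cost $\sum_t c^P_t x_t$ does not depend on the scenario. So deciding whether $\min_{\pmb{\ell}} \mathcal{C}(\pmb{x},\pmb{\ell})<+\infty$ amounts to deciding whether the orders $x_1,\dots,x_T$ can be assigned to arrival periods $t+l_t$, with $l_t\in\{l^{\min}_t,\dots,l^{\max}_t\}$ and no ordering constraint between different $t$, so that cumulative arrivals dominate cumulative demand. If this feasibility question is NP-complete, then no polynomial algorithm can return a finite-ratio approximation unless P$=$NP. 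An approximation algorithm would output a finite-cost scenario exactly when one exists, so inapproximability follows together with NP-hardness. To make the statement robust even under a convention that infeasibility is a large finite cost, one can also set all holding costs to $0$; the optimum is then either the fixed production cost or infeasible, and the gap is arbitrary.

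The reduction I would use is from \textsc{Partition}, or better \textsc{3-Partition} if strong NP-hardness is wanted. Take integers $a_1,\dots,a_n$ with $\sum a_i=2S$. Set $T=n$ and $x_t=a_t$ (so capacities are $C_t\geq a_t$), and choose $\widehat{l}$ and deviations so that every order can arrive in either of two future periods $p_1<p_2$. Since the windows $[t+l^{\min}_t,t+l^{\max}_t]$ shift with $t$, $\zeta^-_t$ and $\zeta^+_t$ must be chosen per period so that $t+l^{\min}_t=p_1$ and $t+l^{\max}_t=p_2$. Intermediate arrival periods may also be allowed; I would handle these by putting zero demand between $p_1$ and $p_2$, in which case arriving in between is equivalent to arriving at $p_2$. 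Set $d_{p_1}=S$, $d_{p_2}=S$, and all other demands $0$. Because order crossovers are allowed, any subset of orders may arrive at $p_1$. Feasibility requires the amount arriving by $p_1$ to be at least $S$, and the total arriving by $p_2$ is $2S$, which is enough.

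The flaw in this first version is that arriving early is never harmful: with zero holding cost, sending everything to $p_1$ is always feasible. I would therefore add a holding cost that forces exactness. Place a huge demand later, or better, use a three-tier structure in which the instance is feasible at finite cost at most $K$ if and only if the arrival at $p_1$ is exactly $S$. Concretely, set holding cost $c^I_{p_1}=1$ and compare the optimum value with the production cost. Total holding cost at $p_1$ equals $O_{p_1}-S\geq 0$, and it is $0$ if and only if a subset sums to exactly $S$. The optimal value therefore equals $\sum_t c^P_t x_t$ (which we may set to $0$ by taking $c^P\equiv 0$) if and only if the \textsc{Partition} instance is a yes-instance, and is at least $1$ otherwise. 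Distinguishing optimum $0$ from optimum at least $1$ rules out any multiplicative approximation unless P$=$NP.

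The main obstacle is the index bookkeeping for the windows: satisfying $t+l^{\min}_t>T$ and $t+l^{\max}_t\leq T^+$ while making every window start exactly at $p_1$ and end exactly at $p_2$. This fixes $l^{\min}_t=p_1-t$ and $l^{\max}_t=p_2-t$. Such windows violate the monotonicity assumptions $l^{\min}_t\leq l^{\min}_{t+1}+1$ only in the allowed direction: here $l^{\min}_t=l^{\min}_{t+1}+1$, so the conditions hold with equality. A common nominal $\widehat{l}$ lying in every window requires $p_1-1\leq \widehat{l}\leq p_2-n$, which holds if I choose $p_2\geq p_1+n$ with zero demand in between. I would verify these constraints explicitly and note that crossovers are essential: with no-crossover, the earliest orders would be forced into $p_1$, which is exactly the structure that Section~\ref{sec:solving} later exploits.
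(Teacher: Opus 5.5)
Your final construction is correct and is essentially the paper's reduction. Both use \textsc{Partition} with $x_t=a_t$, demand $S$ at an early and a late future period, and a positive holding cost at the early one, so the optimum is $0$ iff some subset sums to $S$, which gives NP-hardness and inapproximability together. The paper differs only cosmetically: it uses windows $l_t\in\{n,\dots,2n+1-t\}$ with $c^I_{2n}=2S$ instead of your common earliest arrival $p_1$ with $c^I_{p_1}=1$.

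Your opening feasibility framing is a dead end in general, not only in your first instance. Sending every order at its earliest possible period maximizes every $O_t$, so feasibility is always easy to decide. Your final argument does not rely on it.
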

 \begin{proof}
See Appendix~\ref{dod}.
\end{proof}
Since the problem of computing an optimistic lead-time scenario for a 
given production plan is a subproblem of both the \textsc{Min-Min} and 
\textsc{R*} problems, Theorem~\ref{thmminmin} leads to the following corollary.
\begin{cor}
\label{corcapx}
The \textsc{Min-Min} and \textsc{R*} problems with order crossovers, 
without backordering and without setup costs are NP-hard and inapproximable, unless $\text{P} \neq \text{NP}$.
\end{cor}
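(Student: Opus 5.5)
The statement to prove is Corollary~\ref{corcapx}, and I would obtain it as a direct consequence of Theorem~\ref{thmminmin}. The plan is to show that an algorithm for \textsc{Min-Min} (or \textsc{R*}) with order crossovers, without backordering and without setup costs, can be used to compute a best lead-time scenario for a fixed production plan $\pmb{x}$. This is exactly problem~(\ref{wleadprob}), which is NP-hard and inapproximable by Theorem~\ref{thmminmin}.

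\textbf{Step 1: fix the plan through the capacities.} Let $\pmb{x}$ be the plan in the hard instance from Theorem~\ref{thmminmin}. I would set $C_t=x_t$ for each $t\in[T]$. Because backordering is forbidden, every future demand must be covered by some delivery. The instance produced in the proof of Theorem~\ref{thmminmin} presumably has total production equal to total demand, i.e. $\sum_t x_t = D_{T^+}$. If so, every feasible plan must satisfy $x_t=C_t$ for all $t$, so $\Xset$ effectively reduces to the single plan $\pmb{x}$.

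\textbf{Step 2: fall back on production costs if needed.} If the instance does not have this tight structure, I would instead choose production costs $c^P_t$ that force $\pmb{x}$. One option is a large negative production cost on the capacity-bounded periods. If negative costs must be avoided, I would add a constant offset. The point is to make any deviation from $\pmb{x}$ strictly worse by more than the range of achievable holding costs.

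\textbf{Step 3: transfer the hardness.} With $\pmb{x}$ forced and setup costs zero, the \textsc{Min-Min} value equals $\sum_t c^P_t x_t+\min_{\pmb{\ell}}\mathcal{C}'(\pmb{x},\pmb{\ell})$, where $\mathcal{C}'$ denotes the holding-cost part. The added term is a known constant. For \textsc{R*}, I would take $B$ very large, so that the constraint is inactive and \textsc{R*} coincides with \textsc{Min-Min}.

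\textbf{Main obstacle.} The delicate part is the inapproximability claim. An additive constant from production costs can destroy approximation-ratio preservation. I would therefore prefer the zero-production-cost route of Step~1, so that the objective equals the scenario cost exactly and any approximation ratio carries over directly. I expect this to work if the inapproximability in Theorem~\ref{thmminmin} comes from distinguishing optimal value $0$ from a positive value, since then even the exact value zero is preserved.

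Finally, I need the reduction to respect the problem class. The uncertainty set is $\mathcal{L}$ with order crossovers allowed, and $\Gamma$ can be made large so that it plays no role. Both features are consistent with the hypotheses of the corollary.
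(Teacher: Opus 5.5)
Your treatment of \textsc{Min-Min} is correct. It is a concrete version of the paper's one-line justification, which says only that computing a best scenario for a fixed plan is a ``subproblem'' of \textsc{Min-Min} and \textsc{R*}. Step~1 does go through. In the instance of Theorem~\ref{thmminmin} one has $x_t=a_t$ and $d_{2n}=d_{2n+1}=S$, so $\sum_t a_t=2S=D_{2n+1}$. With $C_t=a_t$, $c^P_t=c^S_t=0$ and no backordering, every other plan is short in period $2n+1$ under every scenario. Hence the \textsc{Min-Min} value equals $\min_{\pmb{\ell}}\mathcal{C}(\pmb{a},\pmb{\ell})$, which is $0$ for a yes-instance of \textsc{Partition} and positive otherwise. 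Step~2 is unnecessary.

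The gap is in the \textsc{R*} step. Taking ``$B$ very large'' with ``$\Gamma$ large so that it plays no role'' does not make the cost-budget constraint inactive. In this instance the scenario $l_t=2n+1-t$ delivers nothing before period $2n+1$ while $D_{2n}=S>0$. Without backordering this gives $\max_{\pmb{\ell}\in\mathcal{L}(\Gamma_2)}\mathcal{C}(\pmb{x},\pmb{\ell})=+\infty$ for every plan once $\Gamma_2$ is large. So no $B\in\Rset_+$ satisfies the constraint, and the fallback to $\pmb{x}^p$ is not even well defined, since all plans tie at $+\infty$. The generic remark that \textsc{R*} reduces to \textsc{Min-Min} for large $B$ tacitly assumes finite worst-case costs.

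The repair is to use the two budgets separately:
\begin{itemize}
\item take $\Gamma_1$ large, so the minimum ranges over all of $\mathcal{L}$ with crossovers allowed;
\item take $\Gamma_2=0$, so the maximum is over the nominal scenario $l_t=n$ only.
\end{itemize}
The nominal scenario delivers everything by period $2n$, so the forced plan has worst-case cost $M\cdot S=2S^2$. With $B=2S^2$ the feasible set of \textsc{R*} is exactly $\{\pmb{a}\}$, and its optimal value coincides with that of \textsc{Min-Min}.

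Finally, because the plan is forced, an algorithm that merely outputs $\pmb{x}$ is trivially exact on these instances. You should therefore state explicitly what transfers: the hardness of computing or approximating the optimal value, or equivalently of returning a plan together with a scenario in $\mathcal{L}(\Gamma_1)$ that certifies a cost within the factor. This is exactly the $0$-versus-positive gap of Theorem~\ref{thmminmin}.
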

In the next section, we show that with the no order crossover assumption, the optimistic lead-time scenario can be computed in polynomial time. Therefore, this assumption is crucial when designing efficient algorithms for the problem.

\section{Solving the \textsc{R*}~problem}
\label{sec:solving}

In this section, we develop solution methods for the \textsc{R*}~problem 
and its special cases. We show that computing the best and worst 
lead-time scenarios for a given production plan can be done in 
polynomial time. Furthermore, we provide algorithms for determining 
pessimistic, optimistic, and \textsc{R*}~production plans.

\subsection{Computing the best and worst lead-time scenarios for a given production plan}
\label{secwblead}

We now show how to compute  a worst (pessimistic) and a best (optimistic) lead-time scenarios $\pmb{\ell}^o$ and $\pmb{\ell}^p$ for a given production plan $\pmb{x}\in \mathbb{X}$ by solving adversarial problems~(\ref{bleadprob}) and~(\ref{wleadprob}).
We show that both scenarios can be found by solving a \emph{constrained shortest} and a \emph{constrained longest path} problem, respectively, in a particular acyclic digraph.

 Recall that in the \emph{constrained shortest path problem} (\textsc{CSP} for short), we are given a directed graph $G=(V,A)$ with costs $c_e$ and integer transition times $t_e$ specified for each arc $e\in A$,
 with the source~$s\in V $ and the  sink~$t\in V$. We seek an $s$-$t$ path in $G$ with the smallest total cost, subject to the condition that its total transition time does not exceed a given number $K$. The \textsc{CSP} problem can be solved in $O(|A| K)$ time by the algorithm proposed by~\cite{H92}. The \emph{constrained longest path problem} (\textsc{CLP}) in acyclic digraph is defined in a similar way and has the same computational complexity.

We start with a characterization of all possible cumulative deliveries in a given period $t\in [T^+]$.  Let us define
 $I_t=\max \{i\in [T]:\; i+l^{\max}_i\leq t\}$ and $J_t=\max\{i\in [T]:\;i+l^{\min}_i\leq t\}$.  
 Observe that production in periods $1, \dots, I_t$ must arrive by period $t$, and production in periods $1, \dots, J_t$ may arrive by period $t$. We set $I_t = 0$ if $i + l_i^{\max} > t$ for each $i \in [T]$, and $J_t = 0$ if $i + l_i^{\min} > t$ for each $i \in [T]$. Note that $0 \leq I_t \leq J_t \leq T$. 
 A proof of the following result is immediate.
\begin{prop}
\label{prop_delta}
	For each period $t\in [T^+]$
   \begin{equation}
   \label{edelta}
   O_t\in \mathcal{O}_t=\left\{ \sum_{i=1}^{I_t}x_i, \sum_{i=1}^{I_t+1}x_i,\dots , \sum_{i=1}^{J_t}x_i \right\},
   \end{equation}
   where we set $\sum_{i=1}^{0} x_i:=0$.
\end{prop}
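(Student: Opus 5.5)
The plan is to show that, under any scenario $\pmb{\ell}\in\mathcal{L}$, the set $\{i\in[T]: i+l_i\leq t\}$ is a prefix $\{1,\dots,k\}$ of $[T]$ with $I_t\leq k\leq J_t$. Once this is established, (\ref{spp3}) gives $O_t=\sum_{i=1}^{k}x_i$, which is an element of $\mathcal{O}_t$.

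\emph{Prefix property.} The first step uses the no order crossover constraint (\ref{clun2}). It can be rewritten as $i+l_i\leq (i+1)+l_{i+1}$, so the arrival periods $a_i:=i+l_i$ are nondecreasing in $i$. Therefore, if $a_j\leq t$ for some $j$, then $a_i\leq a_j\leq t$ for every $i<j$. Hence the set $\{i: a_i\leq t\}$ equals $\{1,\dots,k\}$ for some $k\in\{0,\dots,T\}$, where $k=0$ corresponds to the empty set.

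\emph{Bounds on $k$.} First take $i\leq I_t$. The monotonicity assumption $l^{\max}_i\leq l^{\max}_{i+1}+1$ implies that $i+l^{\max}_i$ is nondecreasing. Since $I_t$ is the largest index with $I_t+l^{\max}_{I_t}\leq t$, we get $i+l^{\max}_i\leq t$ for all $i\leq I_t$. Together with $l_i\leq l^{\max}_i$ from (\ref{clun1}), this gives $a_i\leq t$, so $k\geq I_t$. Now take $i>J_t$. By the analogous monotonicity of $i+l^{\min}_i$ and the maximality of $J_t$, we have $i+l^{\min}_i>t$. Since $l_i\geq l^{\min}_i$, it follows that $a_i>t$, so $k\leq J_t$. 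The boundary conventions $I_t=0$ and $J_t=0$ need no separate argument, because the same reasoning applies with empty index ranges.

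The argument is essentially routine. The one step that requires care is the use of monotonicity of $i+l^{\max}_i$ and $i+l^{\min}_i$. Without it, the definitions of $I_t$ and $J_t$ as maxima would not by themselves guarantee that every smaller index also satisfies the inequality, or that every larger index violates it. The paper assumes this monotonicity explicitly in Section~\ref{sec:uncelead}, so the proof can invoke it directly.
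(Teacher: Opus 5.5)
Your proof is correct and is the argument the paper leaves as ``immediate'': no order crossover makes the set of arrived orders a prefix $\{1,\dots,k\}$, and the definitions of $I_t$ and $J_t$ force $I_t\le k\le J_t$. Your closing claim that the monotonicity of $i+l^{\max}_i$ and $i+l^{\min}_i$ is essential is overstated, though. For $k\ge I_t$, the bound $a_{I_t}\le I_t+l^{\max}_{I_t}\le t$ (when $I_t\ge 1$) together with the prefix property already suffices, and for $k\le J_t$, maximality of $J_t$ alone gives $i+l^{\min}_i>t$ for every $i>J_t$.
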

 \begin{figure}
	\centering
	\includegraphics[scale=0.9]{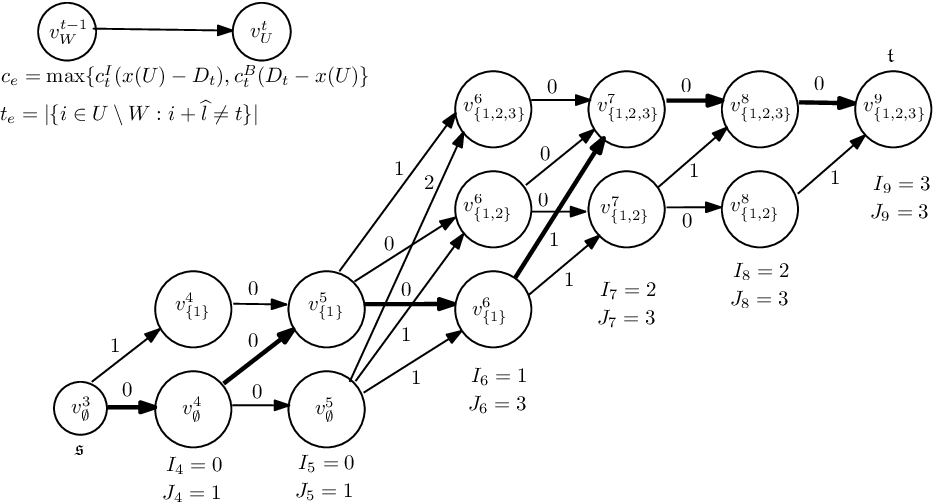}
	\caption{A sample network for $T=3$, $T^+=9$, $\pmb{x}=(x_1,x_2,x_3)\in \Xset$, $\widehat{l}=\pmb{4}$,  $l_1\in \{3,\pmb{4},5\}$, $l_2\in\{\pmb{4},5\}$, $l_3\in\{3,\pmb{4},5,6\}$. The numbers on the arcs represent the transition times~$t_e$, $e\in A$. The scenario set is $\mathcal{U}(\Gamma^d)$. The bold path corresponds to lead-time scenario $\pmb{\ell}=(4,5,4)$. }
	\label{fig:network}
\end{figure}

We are given a production plan~$\pmb{x}$.
Using Proposition~\ref{prop_delta} we can construct a layered digraph  $G=(V,A)$ that represents all possible lead-time scenarios. The set  of nodes $V$ is partitioned into  disjoint layers $V_{T+1},\dots, V_{T^+}$ that correspond to future periods $t\in [T^+]$. The nodes of the layer $V_t$ correspond to all possible cumulative deliveries in $\mathcal{O}_t$, described in Proposition~\ref{prop_delta}. Namely, node $v^t_U$ in layer $V_t$ corresponds to period $t\in [T^+]$ and $U$ is the set of summation indices in~(\ref{edelta}), which yields the cumulative delivery $x(U)=\sum_{i\in U} x_i$ at period~$t$. An arc exists from  $v^{t-1}_W$ to $v^t_U$ if $W\subseteq U$. We
 add the source $\mathfrak{s}=v^{T}_{\emptyset}$ linked to all nodes of the layer $V_{T+1}$. By the assumptions imposed in Section~\ref{sec:uncelead}, the layer $V_{T^+}$ contains only one node $v_{[T]}^{T^+}$ (production from all periods in $T$ must arrive by $T^+$).
 We set the sink $\mathfrak{t}=v^{T^+}_{[T]}$.  The cost of arc $(v^{t-1}_W, v^{t}_U)$ is equal to $ \max\{c^I_{t}(x(U)-D_{t}),c^B_{t}(D_{t}-x(U))\}$. The transition time~$t_e$ of the arc $e=(v^{t-1}_W, v^{t}_U)$ is set as follows. Let $Q=\{i\in U\setminus W:  i+\widehat{l}\neq t\}$. For the uncertainty set $\mathcal{U}(\Gamma^d)$, we set the transition time to $|Q|$, for the uncertainty set $\mathcal{U}(\Gamma^c)$ we fix the transition time to $\sum_{i\in Q} |t-(i+\widehat{l})|$. Finally, 
 we set the bound on the total transition time $K=\Gamma$, $\Gamma\in \{\Gamma^d,\Gamma^c\}$.

A sample construction is shown in Figure~\ref{fig:network}. For example, in period $6\in [T^+]$, we have $I_6=1$ (production in period 1 must be delivered by 6) and $J_6=3$ (productions in periods $2,3$ can be delivered by 6). In what follows, layer $V_6$ contains three nodes $v^6_{\{1\}}$, $v^6_{\{1,2\}}$, $v^6_{\{1,2,3\}}$ that represent three possible cumulative deliveries $O_6=x_1$, $O_6=x_1+x_2$, $O_6=x_1+x_2+x_3$ that correspond to three possible lead-time scenarios. The transition time of the arc $(v_{\emptyset}^5,v_{\{1,2,3\}}^6)$ is equal to~2, because $|Q|=|\{1,4\}|=2$.

Each feasible $\mathfrak{s}$-$\mathfrak{t}$ path in $G$ models a possible cumulative delivery plan $\pmb{O}$ implied by
a given production plan~$\pmb{x}$. The cost of this path is the total inventory and backorder cost of $\pmb{x}$ and its
total transition time represents the left-hand side of the budget constraint in $\mathcal{L}(\Gamma)$, $\Gamma\in \{\Gamma^d,\Gamma^c\}$.
(see (\ref{dus}) and (\ref{cus})). For example, the bold path shown in Figure~\ref{fig:network} represents the cumulative delivery plan $\pmb{O}$ such that $O_4=0$, $O_5=x_1$, $O_6=x_1$, $O_7=x_1+x_2+x_3$, $O_8=x_1+x_2+x_3$, $O_9=x_1+x_2+x_3$. This plan corresponds to  lead-time scenario $l_1=4$, $l_2=5$, and $l_3=4$ for $\Gamma^d=1$ (at most one lead time is allowed to be different from $\widehat{l}=4$).

Since the production and setup costs of $\pmb{x}$ are fixed and do not depend on lead times, a shortest constrained 
$\mathfrak{s}$-$\mathfrak{t}$ path models a best lead-time scenario, while a longest  constrained $\mathfrak{s}$-$\mathfrak{t}$ path models a worst lead-time scenario for given~$\pmb{x}$. The graph $G$ has $O(T\cdot (T^+-T))$ nodes. Observe that $\Gamma^d$ is a nonnegative integer not greater than $T$ and $\Gamma^c$ is a nonnegative integer not greater than $T\cdot (T^+-T)$. 
Hence, a best  (a worst) lead-time scenarios can be computed in polynomial time by applying the $O(|V| K)$ algorithm for 
the \textsc{CSP} (\textsc{CLP}) problem, where $K=\Gamma^c$ or $K=\Gamma^d$.
 We have thus proved the following theorem.
 \begin{thm}
 Computing the best and worst lead-time scenarios for a given production plan
 under 
the budgeted uncertainty sets $\mathcal{L}(\Gamma^d)$ and $\mathcal{L}(\Gamma^c)$ can be done in
polynomial time.
 \end{thm}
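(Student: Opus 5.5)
The plan is to make rigorous the reduction sketched above: lead-time scenarios in $\mathcal{L}(\Gamma)$ must correspond to resource-feasible $\mathfrak{s}$-$\mathfrak{t}$ paths in the layered digraph $G$, with matching cost and budget usage. A \textsc{CSP} (respectively \textsc{CLP}) solver then returns $\pmb{\ell}^o$ (respectively $\pmb{\ell}^p$).

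\textbf{Step 1: scenarios give paths.} Fix $\pmb{\ell}\in\mathcal{L}$. By the no-crossover constraint (\ref{clun2}), the arrival periods $i+l_i$ are nondecreasing in $i$. Hence the set of orders arrived by period $t$ is a prefix $\{1,\dots,k_t\}$. Constraint (\ref{clun1}) forces $I_t\le k_t\le J_t$, so by Proposition~\ref{prop_delta} it is a node of $V_t$. The prefixes are nested as $t$ grows, so consecutive nodes are joined by arcs. For a fixed $\pmb{x}$ and cumulative deliveries $O_t$, the optimal $I_t,B_t$ in (\ref{spp1})--(\ref{spp7}) are $(O_t-D_t)^+$ and $(D_t-O_t)^+$. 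Thus the inventory/backorder cost equals the sum of arc costs $\max\{c^I_t(x(U)-D_t),c^B_t(D_t-x(U))\}$, using $c^I_t,c^B_t\ge0$. The setup and production costs do not depend on $\pmb{\ell}$. Every order $i$ belongs to $U\setminus W$ on exactly one arc, namely the one entering layer $i+l_i$. On that arc it contributes $1$ (resp. $|l_i-\widehat l|$) to the transition time exactly when $l_i\neq\widehat l$. So the path's transition time equals $|\{i:l_i\ne\widehat l\}|$ (resp. $\sum_i|l_i-\widehat l|$), and it is at most $K=\Gamma$ iff $\pmb{\ell}\in\mathcal{L}(\Gamma)$.

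\textbf{Step 2: paths give scenarios.} Conversely, any $\mathfrak{s}$-$\mathfrak{t}$ path visits prefixes $\{1,\dots,k_t\}$ with $k_t$ nondecreasing and $k_{T^+}=T$. Define $l_i=t-i$, where $t$ is the first layer whose prefix contains $i$. Then arrival times are nondecreasing, which gives (\ref{clun2}). Membership $I_t\le k_t\le J_t$ in every layer gives (\ref{clun1}): $I_t\le k_t$ for $t=i+l^{\max}_i$ gives $l_i\le l^{\max}_i$, and $k_{t-1}\le J_{t-1}$ gives $l_i\ge l^{\min}_i$. The bookkeeping of Step 1 then shows the cost and budget usage agree. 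The only subtlety is when $x_i=0$ for some orders, since different scenarios may then give equal cost. This is harmless, because we index nodes by index sets and not by values.

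\textbf{Step 3: complexity.} It follows that $\min_{\pmb{\ell}}\mathcal{C}$ and $\max_{\pmb{\ell}}\mathcal{C}$ equal the constant production and setup cost plus the constrained shortest and constrained longest path values, and the paths decode into $\pmb{\ell}^o,\pmb{\ell}^p$. The graph $G$ is acyclic and layered with $O(T(T^+-T))$ nodes and $O(T^2(T^+-T))$ arcs. Both problems are solved by the pseudo-polynomial dynamic program over (node, consumed budget), giving $O(|A|K)$ time, with the resource capped at $K$ and transition times being nonnegative integers. This is polynomial because $K\le T$ for $\Gamma^d$, and we may cap $\Gamma^c$ at $T(T^+-T)$ without changing $\mathcal{L}(\Gamma^c)$. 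The \textsc{CLP} in a DAG uses the same recursion with $\max$ instead of $\min$. I expect Step 2, the exact correspondence and in particular that the path bounds recover (\ref{clun1}), to be the main point needing care. Everything else is routine.
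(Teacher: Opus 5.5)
Your proposal is correct and follows the paper's argument: the same layered digraph on the prefix index sets of Proposition~\ref{prop_delta}, the same arc costs and transition times, and a \textsc{CSP}/\textsc{CLP} solve with $K=\Gamma$. You also spell out the scenario--path correspondence that the paper only asserts, but Step~2 has an index slip: $l_i\ge l^{\min}_i$ follows from $i\le k_t\le J_t$ at the arrival layer $t=i+l_i$ together with the standing assumption that $j+l^{\min}_j$ is nondecreasing in $j$, not from $k_{t-1}\le J_{t-1}$, which says nothing about $i$ because $k_{t-1}<i$.
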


\subsection{Computing the pessimistic production plan}
\label{secpessymplan}

We begin by showing that the problem of computing a worst-case lead-time scenario for a given production plan~$\pmb{x}\in \Xset$ boils down to the \emph{longest path problem} in an acyclic graph. Specifically, we transform the \textsc{CLP} problem in the graph $G=(V,A)$, constructed in Section~\ref{secwblead}, into a longest path problem in another acyclic graph $G'=(V',A')$. The construction of~$G'$ is the same for both $\Gamma^c$ and $\Gamma^d$.
Namely,
we split each node $v^t_{U}$ of $G$, $t\in[T^+]$, into $\Gamma+1$ nodes $v^{t,i}_{U}$, $i=0,\dots,\Gamma$, 
where $\Gamma\in \{\Gamma^d,\Gamma^c\}$. An arc $e'\in A'$, $e'=(v^{t,i}_W,v^{t+1,j}_U)$,
exists from $v^{t,i}_W$ to $v^{t+1,j}_U$ if $W\subseteq U$ and $j=i+t_e\leq \Gamma$, where $t_e$ is the transition time of the arc 
$e=(v^t_W, v^{t+1}_U)\in A$ in $G$. The cost the arc is the same as in~$G$, i.e.
it is equal to $ \max\{c^I_{t+1}(x(U)-D_{t+1}),c^B_{t+1}(D_{t+1}-x(U))\}$.
We add the sink $\mathfrak{t}$ to $G'$ and link all the nodes from the last layer to it using dummy (dashed) arcs. The cost of all these dummy arcs are~0.
We remove from $G'$ all nodes that are not reachable from the source~$\mathfrak{s}=v^{T,0}_{\emptyset}$ and from  which the sink~$\mathfrak{t}$ is not reachable. A sample construction is shown in Figure~\ref{fig:network-trans}. Note that node $v^{7,2}_{\{1,2\}}$ is 
not present in $G'$, since there is no path from this node to the sink~$\mathfrak{t}$.
 \begin{figure}[ht]
	\centering
	\includegraphics[scale=0.75]{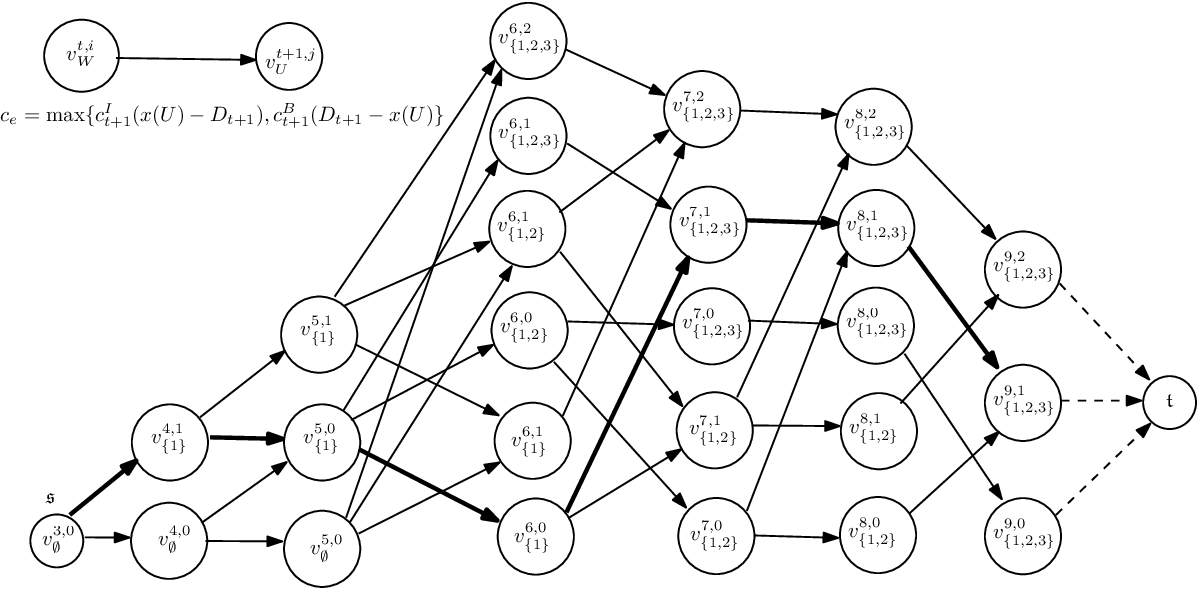}
	\caption{The transformed graph $G'$ for the example shown in Figure~\ref{fig:network} with $\Gamma=2$. The bold path corresponds to lead-time scenario $l_1=4$, $l_2=5$, $l_3=4$. }
	\label{fig:network-trans}
\end{figure}
Now, it is evident that each $\mathfrak{s}$-$\mathfrak{t}$ path in $G'$ corresponds to some feasible lead-time scenario 
under the uncertainty set~$\mathcal{L}(\Gamma)$, $\Gamma\in \{\Gamma^d,\Gamma^c\}$, and vise-versa, each feasible lead-time scenario corresponds to some 
$\mathfrak{s}$-$\mathfrak{t}$ path in $G'$.
The cost of a longest path is the largest total inventory and the backorder cost of a given~$\pmb{x}$ under a worst lead-time scenario. 

We now formulate a Mixed-Integer Programming (MIP) model to solve the 
\textsc{Min-Max} problem (see \eqref{probpes}), i.e.
to compute a pessimistic production plan $\pmb{x}^p$. The MIP is constructed based on the graph $G' = (V', A')$. 

Let $z_{U}^{t,i} \geq 0$ and $z_{\mathfrak{t}} \geq 0$ be variables defined for each node of $G'$. Note that this MIP is 
derived from the dual of the longest path formulation in $G'$, where the components of $\pmb{x}$ embedded in the arc costs of $G'$ are now treated as decision variables. 
Consequently, a pessimistic production plan $\pmb{x}^p$ can be obtained by solving the following MIP formulation:
\begin{align}
 \min \;  & z_{\mathfrak{t}}+\sum_{i\in [T]} (c^P_t x_t+c^S_ty_t) \label{wp1} \\ 
 \text{ s.t. }	 & z_{\mathfrak{t}} \geq z_{v} & v\in V' \label{wp2} \\ 
	 & z_{U}^{t+1,j}-z_{W}^{t,i}\geq c^I_{t+1}(x(U)-D_{t+1}) & (v_{W}^{t,i}, v_{U}^{t+1,j})\in A'\\
	  & z_{U}^{t+1,j}-z_{W}^{t,i}\geq c^B_{t+1}(D_{t+1}-x(U)) & (v_{W}^{t,i}, v_{U}^{t+1,j})\in A' \label{wp3}\\
	  & x_t\leq C_ty_t & t\in [T]\\
	  & y_t\in \{0,1\} & t\in [T]\\
	  & z_v\geq 0 & v\in V'\\
	  & \pmb{x}\in \Xset\subseteq \Rset^{T}_{+}, &  \label{wp8}
 \end{align}
 where $x(U)=\sum_{i\in U} x_i$.

The \textsc{Min-Max} problem is NP-hard, as it reduces to the deterministic lot-sizing problem \eqref{spp1}--\eqref{spp7} by setting $\Gamma^d = \Gamma^c = 0$, which is known to be NP-hard \cite{florian1971deterministic, FLK80}. However, observe that our formulation \eqref{wp1}--\eqref{wp8} contains only $T$ binary variables, which are associated with the setup costs for periods in $[T]$. This observation leads to the following corollary.

\begin{cor}
If all setup costs $c_t^S$ for $t \in [T]$ are equal to 0, then the pessimistic production plan $\pmb{x}^p$ can be computed by solving a linear programming model. Equivalently, the \textsc{Min-Max} problem without setup costs is polynomially solvable under the budgeted uncertainty sets $\mathcal{L}(\Gamma^d)$ and $\mathcal{L}(\Gamma^c)$.
\end{cor}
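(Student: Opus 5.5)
Setting $c^S_t=0$ for every $t\in[T]$ makes the binary variables $y_t$ irrelevant, and what remains of \eqref{wp1}--\eqref{wp8} is a linear program. The argument therefore has two parts: first confirm that \eqref{wp1}--\eqref{wp8} is a correct formulation of \eqref{probpes}, then note that it collapses to an LP of polynomial size.

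For correctness I would fix $\pmb{x}\in\Xset$ and $\pmb{y}$ and look at the inner minimization over $z$. The worst-case inventory and backorder cost equals the longest $\mathfrak{s}$-$\mathfrak{t}$ path in the acyclic graph $G'$, whose arc costs are $\max\{c^I_{t+1}(x(U)-D_{t+1}),c^B_{t+1}(D_{t+1}-x(U))\}$. By the path/scenario correspondence established above, this is $\max_{\pmb{\ell}\in\mathcal{L}(\Gamma)}$ of the holding plus backorder part of $\mathcal{C}(\pmb{x},\pmb{\ell})$. On a DAG the longest path equals the minimum of $z_{\mathfrak{t}}-z_{\mathfrak{s}}$ over node potentials satisfying $z_{\text{head}}-z_{\text{tail}}\ge$ arc cost; this is LP duality, or directly: the upper bound comes from telescoping the constraints along any path, and the longest-path labels from $\mathfrak{s}$ attain it. Because the arc cost is a maximum of two linear functions, the constraint $z_{\text{head}}-z_{\text{tail}}\ge\max\{\cdot,\cdot\}$ splits exactly into the two linear inequalities in the model.

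Two normalization details need checking. First, fixing $z_{\mathfrak{s}}=0$ is harmless: together with $z\ge0$ it can be assumed, since all arc costs are nonnegative and the longest-path labels are therefore nonnegative. Second, constraint \eqref{wp2}, $z_{\mathfrak{t}}\ge z_v$, stands in for the zero-cost dummy arcs into $\mathfrak{t}$. Requiring it over all of $V'$ rather than only the last layer is harmless, because potentials are nondecreasing along the nonnegative arcs and every node of $G'$ lies on an $\mathfrak{s}$-$\mathfrak{t}$ path. Minimizing jointly over $(\pmb{x},\pmb{y},z)$ then yields exactly $\min_{\pmb{x}}\max_{\pmb{\ell}}\mathcal{C}(\pmb{x},\pmb{\ell})$. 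The key feature here is that $\pmb{x}$ enters only the right-hand sides, and only linearly, so treating it as a variable keeps the model linear.

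With $c^S=0$, I would choose $y_t=1$ for all $t$. This satisfies $x_t\le C_t y_t$ because $\pmb{x}\in\Xset$, and it does not change the objective, so the $y$ variables and their constraints can be deleted. The result is an LP in $\pmb{x}$ and $z$. The main obstacle is confirming that this LP has polynomial size, since polynomial solvability of LPs (ellipsoid or interior-point methods) then finishes the proof. Here $|V'|\le(\Gamma+1)|V|$ with $|V|=O(T(T^+-T))$, and $\Gamma^d\le T$ while $\Gamma^c\le T(T^+-T)$. Arcs join consecutive layers only, so $|A'|\le|V'|^2$. Both $|V'|$ and $|A'|$ are therefore polynomial in $T$ and $T^+$, and the coefficients of the LP are input data.
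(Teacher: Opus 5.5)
Your proposal is correct and follows the paper's route. The paper simply observes that the only binary variables in \eqref{wp1}--\eqref{wp8} are the $T$ setup indicators $y_t$, so with $c^S_t=0$ the longest-path-dual model over $G'$ becomes an LP. You additionally spell out what the paper leaves implicit: the DAG longest-path duality including the harmless $z\ge 0$ and \eqref{wp2} conventions, and the polynomial size of $G'$ using $\Gamma^d\le T$ and (after capping without loss of generality) $\Gamma^c\le T(T^+-T)$, which is exactly what the polynomial-solvability claim needs.
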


\subsection{Computing the optimistic production plan}
\label{secoptymplan}
In this section, we first present a mixed-integer programming formulation 
for the \textsc{Min-Min} problem (see \eqref{probopt}), based on a 
min-cost flow model. We then develop a pseudopolynomial-time algorithm 
for this problem.

\subsubsection{Mixed integer programming formulation}
\label{secoptymplan1}

We recall that computing a worst-case lead-time scenario for a given production plan~$\pmb{x}$ 
under the budgeted uncertainty sets $\mathcal{L}(\Gamma^d)$ and $\mathcal{L}(\Gamma^c)$ 
reduces to the \textsc{CSP} in an acyclic graph (see Section~\ref{secwblead}). 
In the same way as in Section~\ref{secpessymplan}, we can construct the corresponding 
acyclic graph~$G'$, in which each $\mathfrak{s}$-$\mathfrak{t}$ path corresponds 
to a feasible lead-time scenario from the uncertainty set~$\mathcal{L}(\Gamma)$, 
$\Gamma \in \{\Gamma^d, \Gamma^c\}$. Conversely, each feasible lead-time scenario 
corresponds to an $\mathfrak{s}$-$\mathfrak{t}$ path in~$G'$, and the cost 
of a shortest path represents the smallest inventory and backorder cost 
for the production plan~$\pmb{x}$.
Unfortunately, using $G'$ to compute an optimistic production plan—where the components of $\pmb{x}$ embedded in the arc costs of $G'$ are treated as decision variables—would lead to a complex MIP formulation, involving a large number of binary variables. In this section, we propose a MIP formulation for computing an optimistic production plan based on a min-cost flow reformulation of the single-item  lot sizing problem (see, e.g.,~\cite{AMO93}).

 We construct a graph $G=(V,A)$ as follows. We introduce a source node $\mathfrak{s}$, nodes $u_t$ for $t \in [T]$, and nodes $v_t$ for $t \in [T^+]$. The source $\mathfrak{s}$ has a supply of $\sum_{t \in [T]} d_t$, while each node $v_t$ has a prescribed demand $-d_t$ for $t \in [T^+]$. For each $t \in [T]$, we add an arc $(\mathfrak{s}, u_t)$ with cost $c^P_t$ and capacity $C_t$; these arcs represent production in periods $t \in [T]$. 
For each $t \in [T]$ and $l \in \{l_t^{\min}, l_t^{\min}+1, \ldots, l_t^{\max}\}$, we add a \emph{lead-time arc} $(u_t, v_{t+l})$ with a cost of 0 and capacity $C_t$. Selecting exactly one lead-time arc emanating from each $u_t$ for $t \in [T]$ corresponds to a lead-time scenario in the uncertainty set $\mathcal{L}$.
Finally, for each $t \in [T^+ - 1]$, we add an arc $(v_t, v_{t+1})$ with cost $c_t^I$ and a sufficiently large capacity $M = \sum_{t \in [T]} C_t$, and an arc $(v_{t+1}, v_t)$ with cost $c_t^B$ and capacity $M$. These arcs represent inventory and backordering between consecutive future periods, respectively. To ensure feasibility, we add an arc $(\mathfrak{s}, v_{T^+})$ with cost $c_{T^+}^B$ and capacity $M$; this arc is utilized, in particular, when $\sum_{t \in [T]} C_t < \sum_{t \in [T^+]} d_t$. Note that in any optimal production plan, we can assume that the inventory level in the final period $T^+$ is zero. Consequently, there is no arc that represents the inventory beyond the period $T^+$. A sample construction is illustrated in Figure~\ref{fig:opt-network}.
\begin{figure}[ht]
	\centering
	\includegraphics[height=7cm]{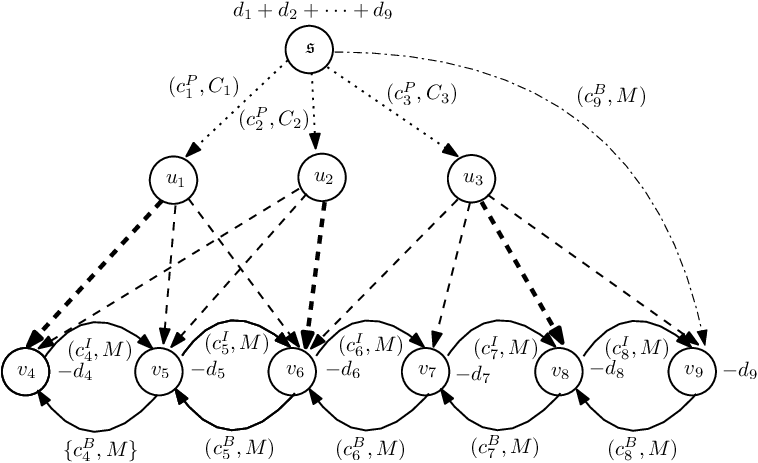}
	\caption{A sample network for $T=3$, $T^+=9$, $\pmb{x}=(x_1,x_2,x_3)$, $\widehat{l}=\pmb{4}$,  
	$\zeta^+_1=\zeta^-_1=1$, 
         $\zeta^+_2=0$, $\zeta^-_2=2$,
	$\zeta^+_3=2$, $\zeta^-_2=1$,
	$l_1\in \{3,\pmb{4},5\}$, $l_2\in\{2,3,\pmb{4}\}$, $l_3\in \{3,\pmb{4},5,6\}$. The lead-time scenario shown in bold is $l_1=3$, $l_2=4$, $l_3=5$.}
	\label{fig:opt-network}
\end{figure}

Let $\mathcal{P}_t = \{t + l_t^{\min}, \dots, t + l_t^{\max}\}$ for $t \in [T]$ be the set describing the possible arrival times for the production initiated in period $t$. Furthermore, let $f(u,v) \geq 0$ denote the flow on arc $(u,v) \in A$, and let $c(u,v)$ be the corresponding cost of the arc. 

We are now ready to present a mixed-integer program  for the \textsc{Min-Min} problem (see \eqref{probopt}) based on a min-cost flow model. Specifically, an optimistic production plan $\pmb{x}^{o}$ under the budgeted uncertainty set $\mathcal{L}(\Gamma^d)$ can be obtained by solving the following MIP model:
\begin{align}
	\min & \sum_{(u,v)\in A} c(u,v)\cdot f(u,v)+\sum_{t\in [T]} c_t^S y_t \label{mobjop}\\
	     \text{s.t. }   & \sum_{j\in \mathcal{P}_t} \rho_{t,j}=1 & t\in [T] \label{mop1}\\
	        & \rho_{t,j}+\rho_{t+1,k}\leq 1 & t\in [T],  j\in \mathcal{P}_{t}, k\in\mathcal{P}_{t+1}: k<j \label{mop2} \\
	        & \sum_{t\in [T], \, j\in \mathcal{P}_t : \, j\neq t+\widehat{l}} \rho_{t,j}\leq \Gamma^d \label{mop2_1} \\
	        & f(\mathfrak{s},u_t)=x_t &  t\in [T] \label{mop3}\\
	        & f(u_t, v_j)\leq \rho_{t,j}C_t &   t\in [T], j\in \mathcal{P}_t \label{mop4} \\
	         & \text{Flow mass balance and capacity constraints} \label{mop5} \\
	         & x_t\leq C_t y_t & t\in [T]\\
	        & y_t\in \{0,1\} & t\in [T] \label{mop6}\\
	        & \rho_{t,j}\in \{0,1\} & t\in [T], j\in \mathcal{P}_t\label{mop7}\\
	        & x_t\geq 0 & t\in [T],\label{mop8}
\end{align}
where
binary variables $\rho_{t,j}$ model feasible arrival times. 
Namely, $\rho_{t,j}=1$ if production $x_t$ in period $t\in [T]$ arrives in period $j\in \mathcal{P}_t$. 
The constraints~(\ref{mop1}) ensure that the production initiated in $t \in [T]$ arrives in exactly one of the possible future periods within $\mathcal{P}_t \subseteq [T^+]$. The constraints~(\ref{mop2}) ensure the chronological consistency of arrivals; specifically, that production from period $t$ cannot arrive after the production from period $t+1$, i.e.
the order crossovers are not allowed. In the example shown in Figure~\ref{fig:network}, where $\mathcal{P}_1 = \{4,5,6\}$ and $\mathcal{P}_2 = \{4,5,6\}$, we add the constraints $\rho_{1,5} + \rho_{2,4} \leq 1$, $\rho_{1,6} + \rho_{2,4} \leq 1$, and $\rho_{1,6} + \rho_{2,5} \leq 1$ to prevent overlapping or inverted lead times. The constraint~(\ref{mop2_1}) models the budget constraint of the uncertainty set $\mathcal{L}(\Gamma^d)$ (see~(\ref{dus})).
In the example, we get 
$$\rho_{1,4}+\rho_{1,6}+\rho_{2,4}+\rho_{2,5}+\rho_{3,6}+\rho_{3,8}+\rho_{3,9}\leq \Gamma^d.$$
To model 
the budget constraint in $\mathcal{L}(\Gamma^c)$
(see (\ref{cus})), we replace~(\ref{mop2_1}) with
$$
\sum_{t\in [T], \, j\in \mathcal{P}_t} |j-t-\widehat{l}|\rho_{t,j}\leq \Gamma^c,
$$
which in the example yields
$$
\rho_{1,4}+\rho_{1,6}+2\rho_{2,4}+\rho_{2,5}+\rho_{3,6}+\rho_{3,8}+2\rho_{3,9}\leq \Gamma^c.
$$
The constraints~(\ref{mop3}) fix the flow on each arc $(\mathfrak{s}, u_t)$ to be equal to the production quantity $x_t$ for $t \in [T]$. The constraints~(\ref{mop4}) ensure that the production $x_t$ is directed to the correct future period $j \in [T^+]$, determined by the lead-time scenarios. Finally, the constraints~(\ref{mop5}) represent the standard mass balance and arc capacity flow constraints for the network~$G$ (see, e.g.,~\cite{AMO93}).
For each feasible choice of $\rho_{t,j}$, the corresponding flow $f(u,v)$ models a delivery plan with the total production, inventory, and backorder cost equal to the cost of the flow. Adding the setup costs, we get the total cost of the production plan $\pmb{x}$. Hence, an optimal solution to~(\ref{mop1})-(\ref{mop8}) represents the cheapest (optimistic) production plan $\pmb{x}^o$, under a best lead-time scenario.

\subsubsection{Pseudopolynomial  algorithm}
\label{spsedo}

We now show that the optimistic production plan can be computed in pseudopolynomial time, that is the \textsc{Min-Min} problem is pseudopolynomially solvable,
under the uncertainty set~$\mathcal{L}(\Gamma)$, 
$\Gamma \in \{\Gamma^d, \Gamma^c\}$.
 Note first that if we fix the optimal values of  binary variables $\pmb{\rho}$ and $\pmb{y}$ in (\ref{mobjop})-(\ref{mop8})), then we can find an optimal production plan $\pmb{x}$ by solving a min-cost flow problem (see also Figure~\ref{fig:opt-network} - we remove the arcs that correspond to $\rho_{t,j}=0$, fix $C_t=0$ if $y_t=0$, and solve the resulting min-cost flow problem). It is well known (see, e.g.,~\cite{AMO93}) that the min-cost flow problem has integral optimal solution provided that node supply/demands and arc capacities are integral. This leads to the following property:
\begin{prop}
\label{propint}
If all capacities $C_t$, $t\in [T]$, and demands $d_t$, $t\in [T^+]$, are integers, then there is an integral optimistic production plan such that $x_t\in \{0,1,\dots,C_t\}$ for each $t\in [T]$.
\end{prop}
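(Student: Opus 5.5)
Take an optimal solution $(\pmb{x}^*,\pmb{y}^*,\pmb{\rho}^*)$ of the \textsc{Min-Min} formulation (\ref{mobjop})--(\ref{mop8}). The objective $\min_{\pmb{x}}\min_{\pmb{\ell}}\mathcal{C}(\pmb{x},\pmb{\ell})$ is a joint minimum over $\pmb{x}\in\Xset$ and $\pmb{\ell}\in\mathcal{L}(\Gamma)$, so such an optimum exists: there are finitely many choices of $(\pmb{\rho},\pmb{y})$, and for each one the remaining problem is a linear program. The binary part $\pmb{\rho}^*$ corresponds to a scenario $\pmb{\ell}^*\in\mathcal{L}(\Gamma)$; constraints (\ref{mop1}), (\ref{mop2}) and (\ref{mop2_1}) (or its $\Gamma^c$ analogue) encode exactly membership in $\mathcal{L}(\Gamma)$. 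We then fix $\pmb{\rho}=\pmb{\rho}^*$ and $\pmb{y}=\pmb{y}^*$ and study the residual problem in $(\pmb{x},f)$.

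The residual problem is a min-cost flow problem on the network of Figure~\ref{fig:opt-network}, modified as follows. Every lead-time arc $(u_t,v_j)$ with $\rho^*_{t,j}=0$ is deleted. The capacity of $(\mathfrak{s},u_t)$ is $C_t$ if $y^*_t=1$ and $0$ if $y^*_t=0$. The term $\sum_t c^S_t y^*_t$ is a constant. By (\ref{mop3}) we may identify $x_t$ with the flow on $(\mathfrak{s},u_t)$, and the capacity bound on that arc reproduces both $x_t\le C_t y^*_t$ and $\pmb{x}\in\Xset$. The remaining arcs, the inventory/backorder arcs and $(\mathfrak{s},v_{T^+})$, have capacity $M=\sum_t C_t$. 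The supply at $\mathfrak{s}$ is $\sum d_t$ and the demands are $d_t$. All of these data are integers under the hypothesis. We must also check that $(\pmb{x}^*,f^*)$ is feasible and optimal for this residual problem, which is immediate because it is the restriction of an optimal MIP solution.

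By the integrality theorem for min-cost flow \citep{AMO93}, a feasible min-cost flow problem with integral supplies, demands and capacities has an integral optimal flow $\bar f$. This holds for any real costs, since optimal vertices of the totally unimodular LP are integral. Set $\bar x_t=\bar f(\mathfrak{s},u_t)\in\{0,1,\dots,C_t\}$; this value lies in that range because $\bar f$ is integral and bounded by the arc capacity. The cost of $\bar f$ is at most the cost of $f^*$. Hence $(\bar{\pmb{x}},\bar f,\pmb{\rho}^*,\pmb{y}^*)$ is feasible for the MIP with objective value no larger than the optimum, so it is optimal. If some $\bar x_t=0$ while $y^*_t=1$, we can lower $y_t$ to $0$, which only reduces the cost, so optimality is preserved.

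Finally, we translate back: $\mathcal{C}(\bar{\pmb{x}},\pmb{\ell}^*)$ is at most the MIP value, which equals $\min_{\pmb{x}}\min_{\pmb{\ell}}\mathcal{C}$. So $\bar{\pmb{x}}$ is an integral optimistic plan. The main point requiring care is exactly this correspondence between the flow model and $\mathcal{C}(\pmb{x},\pmb{\ell})$ from (\ref{spp1})--(\ref{spp7}). The paper asserts, in the text after (\ref{mop8}), that for fixed $\pmb{\rho}$ the min-cost flow value equals the production, inventory and backorder cost of the induced delivery plan, and that a zero final inventory may be assumed. I rely on that identification, noting that inventory and backorder arc flows correspond to $I_t,B_t$ at an optimum, where we never hold both.
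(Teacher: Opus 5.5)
Your proof is correct and follows the paper's own argument: fix the optimal binaries $\pmb{\rho}$ and $\pmb{y}$, observe that what remains is a min-cost flow problem on the network of Figure~\ref{fig:opt-network} with integral supplies, demands and capacities, and invoke the flow integrality theorem to obtain an optimal integral $\bar{\pmb{x}}$. Your extra bookkeeping (existence of an optimum, reading $\bar x_t$ off the source arcs, resetting $y_t$ when $\bar x_t=0$) makes explicit what the paper leaves implicit. One caveat, inherited from the paper, concerns identifying the flow cost with $\mathcal{C}(\pmb{x},\pmb{\ell})$: this needs the auxiliary capacity $M$ to be a sufficiently large integer, e.g.\ $\sum_{t\in[T^+]}d_t$, rather than $\sum_{t}C_t$, which can make the network infeasible when demand greatly exceeds capacity. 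This choice of $M$ does not affect integrality.
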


To compute an optimistic production plan, we modify the construction shown in Section~\ref{secwblead}. We split each node $v^t_W$ of the graph $G=(V,A)$ into nodes $v^t_W(i)$, where $i=0,1,\dots,\sum_{k\in W} C_k$ ($i=0$ if $U=\emptyset$). In the following, $V$ denotes the extended set of nodes. We then create the set of arcs in the following way. For each node $v^{t-1}_W(i)\in V$ we add an arc that leads to node $v^{t}_U(j)\in V$ if $W\subseteq U$ and $j\in \{i,i+1,\dots,i+\sum_{k\in U\setminus W} C_k\}$.  This arc represents a delivery of $j-i$ units from periods in $U\setminus W$ that arrives in period $t$, which corresponds to the equality $\sum_{k\in U\setminus W} x_k=j-i$. The transition time of the arc $(v^{t-1}_W(i), v^{t}_U(j))$ is the same as $(v^{t-1}_W, v^{t}_U)$ in the original graph, as it depends only on $t$, $W$ and $U$.  Finally, we add a sink~$\mathfrak{t}$ and arcs $(v_W^{T^+},\mathfrak{t})$ for each node $v_W^{T^+}\in V$. A sample construction  is shown in Figure~\ref{fig:pseudopoly-network}. For example, the bold arc $(v^5_{\{1\}}(1), v^6_{\{1,2,3\}}(3))$ represents delivery of~2 units from periods 2 and 3 to period 6. The transition time of this arc is equal to~1, since the common lead time is equal to~4 and $l_2=4$, $l_3=3$.
 \begin{figure}
	\centering
	\includegraphics[scale=0.73]{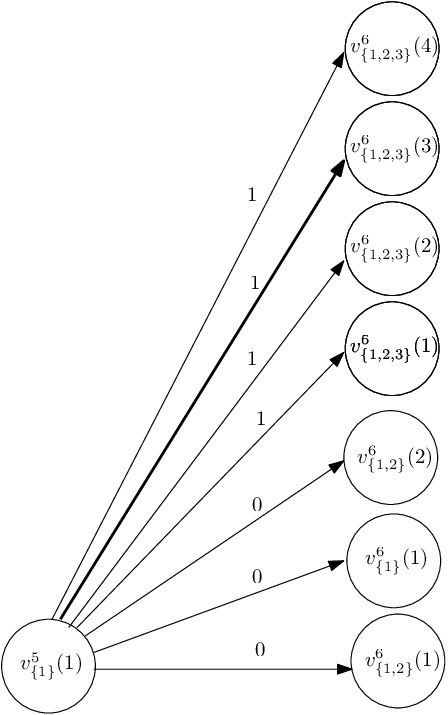}
	\caption{A sample construction for the data from Figure~\ref{fig:network-trans} and $C_1=2$, $C_2=1$, $C_3=2$. The bold arc represents delivery of $2$ units from periods $2$ and $3$ to period $6$. The number on arcs represent the transition times.}
	\label{fig:pseudopoly-network}
\end{figure}
    
It remains to fix the costs of the arcs. Observe that for node $v^t_W(i)$, $i$ is the cumulative delivery in period $t$. Hence, the cost of the arc $(v^{t-1}_W(i), v^{t}_U(j))$ is equal to
$$\max\{c^I_{t}(j-D_{t}),c_{t}^B(D_{t}-j)\}+c_{j-i}^*(W,U),$$
where $c_{j-i}^*(W,U)$ is the optimal cost of delivery $j-i$ units from periods in $U\setminus W$. The value of $c_{j-i}^*(W,U)$ can be computed by solving the following problem:
\begin{equation}
\label{subprob}
\begin{array}{lllll}
	c_{j-i}^*(W,U)= & \min & \displaystyle\sum_{t\in U\setminus W} (c^P_t x_t + c^S_t y_t) \\
	 &\text{ s.t. }& \displaystyle \sum_{t\in U\setminus W} x_t=j-i \\
	&& 0\leq x_t\leq y_t C_t & t\in U\setminus W\\
	&& y_t\in \{0,1\} &  t\in U\setminus W.
\end{array}
\end{equation}
The problem~(\ref{subprob}) is a  \emph{continuous knapsack problem  with setups}. It is weakly NP-hard for arbitrary setup costs. Since it is a special case of the single-item  lot sizing problem with concave costs and without backordering, it can be solved in pseudopolynomial time $O(T\cdot C)$, where $C=\sum_{t\in U\setminus W} C_t$ (see~\cite{FLK80}). If all setup costs are equal to 0, then~(\ref{subprob}) becomes a \emph{continuous knapsack problem} that can be solved in polynomial time $O(T)$ by a simple greedy algorithm.
As in Section~\ref{secwblead} we set a bound on the total transition time $K$ to~$\Gamma$, 
$\Gamma \in \{\Gamma^d, \Gamma^c\}$, and solve the CSP problem in the resulting graph. 
\begin{prop}
	There are integral solution $\pmb{x}\in \Xset$ and $\pmb{\ell}\in \mathcal{L}(\Gamma)$, 
$\Gamma \in \{\Gamma^d, \Gamma^c\}$,  to~(\ref{probopt}) with the total cost at most $c$ if and only if there is a feasible 
$\mathfrak{s}$-$\mathfrak{t}$ path $P$ in $G$ with the total cost at most $c$.
\end{prop}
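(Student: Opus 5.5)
We prove the two directions separately, using the correspondence already established in Section~\ref{secwblead}. There, feasible $\mathfrak{s}$-$\mathfrak{t}$ paths in the original layered graph correspond exactly to cumulative delivery patterns induced by scenarios $\pmb{\ell}\in\mathcal{L}(\Gamma)$, and the path's transition time equals the budget usage. The extended graph only adds a numeric label $i$ to each node. We therefore need to show that the labels encode integral cumulative deliveries and that the arc costs account for exactly the production, setup, inventory and backorder costs.

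\emph{($\Rightarrow$)} Take integral $\pmb{x}\in\Xset$ and $\pmb{\ell}\in\mathcal{L}(\Gamma)$ with $\mathcal{C}(\pmb{x},\pmb{\ell})\le c$, setup variables $y_t$ included. By the no-crossover condition~(\ref{assump}), the set of periods whose orders have arrived by $t$ is a prefix $U_t=\{1,\dots,k_t\}$ with $I_t\le k_t\le J_t$. This gives the path $v^{T+1}_{U_{T+1}},\dots,v^{T^+}_{U_{T^+}}$ of Section~\ref{secwblead}. Its transition time is at most $\Gamma$, because each $i$ lies in exactly one set $U_t\setminus U_{t-1}$, namely for $t=i+l_i$, and so contributes exactly its deviation term. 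Label each node by $i_t=x(U_t)$; these labels are integers since $\pmb{x}$ is integral. Then $0\le i_t-i_{t-1}=x(U_t\setminus U_{t-1})\le\sum_{k\in U_t\setminus U_{t-1}}C_k$, so the arcs exist. The holding/backorder part of each arc cost is exactly the period-$t$ term of $\mathcal{C}$, since $O_t=i_t$, and $B_t,I_t$ are optimally $(D_t-O_t)^+$ and $(O_t-D_t)^+$. The term $c^*_{i_t-i_{t-1}}(U_{t-1},U_t)$ is at most the actual production and setup cost of $\pmb{x}$ on $U_t\setminus U_{t-1}$, because $(x_k,y_k)_{k\in U_t\setminus U_{t-1}}$ is feasible for~(\ref{subprob}). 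The sets $U_t\setminus U_{t-1}$ partition $[T]$: every order arrives by $T^+$, and $U_T=\emptyset$ since $t+l^{\min}_t>T$. Summing the arc costs therefore gives a path cost of at most $\mathcal{C}(\pmb{x},\pmb{\ell})\le c$.

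\emph{($\Leftarrow$)} Given a feasible path with cost at most $c$, read off the sets $U_t$ and the labels $i_t$. Define $\pmb{\ell}$ by $l_k=t-k$ for the unique $t$ with $k\in U_t\setminus U_{t-1}$. Feasibility of $\pmb{\ell}$ (range, no crossover since all $U_t$ are prefixes, and budget equal to the transition time) follows as in Section~\ref{secwblead}. For each arc, choose an optimal solution of~(\ref{subprob}) with $j-i=i_t-i_{t-1}$, and concatenate these over the partition to form $\pmb{x}$. Then $x(U_t)=i_t$ telescopically, so $O_t=i_t$ under $\pmb{\ell}$. The cost $\mathcal{C}(\pmb{x},\pmb{\ell})$, with the chosen $y$, is then exactly the path cost, hence at most $c$. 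This also gives $\pmb{x}\in\Xset$.

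The main obstacle is integrality of the constructed $\pmb{x}$. Problem~(\ref{subprob}) with an integral right-hand side $j-i$ and integral capacities has an integral optimal solution: fixing $y$ leaves a continuous knapsack, whose greedy solution fills items to capacity except possibly the last, which then receives an integer remainder. We choose such a solution on each arc. Finally, by Proposition~\ref{propint}, restricting to integral plans loses no optimality, so this statement yields the pseudopolynomial algorithm.
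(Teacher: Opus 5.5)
Your proof is correct and follows essentially the same route as the paper. Both use the prefix sets $U_t$ with node labels $O_t=x(U_t)$, and both bound $c^*_{O_t-O_{t-1}}(U_{t-1},U_t)$ by the actual production and setup cost of $\pmb{x}$ on $U_t\setminus U_{t-1}$ in the forward direction. In the converse, both assemble $(\pmb{x},\pmb{\ell})$ from optimal solutions of~(\ref{subprob}) along the path. Your greedy argument that~(\ref{subprob}) with integral right-hand side and capacities has an integral optimal solution fills in a point the paper's proof leaves implicit: the plan built in the converse direction is integral, as the statement requires.
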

\begin{proof}
Let $\pmb{x}=(x_1,\dots,x_T)\in \Xset$, $\pmb{\ell}=(l_1,\dots,l_T)\in \mathcal{L}(\Gamma)$ be a feasible integral solution to~(\ref{probopt}) with the total cost $c$. Choose $\pmb{y}\in\{0,1\}^T$, where $y_t=1$ iff $x_t>0$ for $t\in [T]$.
We  form the corresponding path $P$ in $G$ as follows. Define a sequence of sets $U_T,U_{T+1},\dots,U_{T^+}$ where  $U_t=\{i\in [T]: i+l_i\leq t\}$, $t=T,T+1,\dots,T^+$. The set $U_t$ contains all periods in $[T]$ from which the production arrived in period $t\in [T^+]$. We get $U_T=\emptyset$ and $U_T\subseteq U_{T+1}\subseteq\dots\subseteq U_{T^+}$. Define $O_t=\sum_{i\in U_t} x_i$ for $t=T,T+1,\dots,T^+$, so $O_t$ is the cumulative delivery in period $t=T,T+1,\dots,T^+$ (notice that $O_T=0$). Let  $P$ be the path composed of nodes $v^t_{U_t}(O_t)$ for $t=T,T+1,\dots,T^+$ and the sink~$\mathfrak{t}$. By the construction, the path $P$ exists in $G$ and is feasible, as its transition time does not exceed~$\Gamma$. The cost of this path is
$$c(P)=\sum_{t\in [T^+]}\max\{c^I_{t}(O_t-D_{t}),c_{t}^B(D_{t}-O_t)\}+c_{O_{t}-O_{t-1}}^*(U_{t-1},U_t).$$
By the construction (see the problem~(\ref{subprob}))
$$c_{O_{t}-O_{t-1}}^*(U_{t-1},U_t)\leq \sum_{i\in U_{t+1}\setminus U_t}  c^P_t x_t+c^S_t y_t.$$
Since
$$\sum_{t\in [T^+]} \sum_{i\in U_{t+1}\setminus U_t} c^P_t x_t+c^S_t y_t=\sum_{t\in [T^+]} c^P_tx_t+c^S_ty_t,$$
we get $c(P)\leq c$. 

Suppose that there is a feasible path $P$ in $G$ with the cost $c(P)$. This path goes through nodes $v^T_{\{\emptyset\}}(0), v^{T+1}_{U_{T+1}}(O_1),\dots,v^{T^+}_{U_{T^+}}(O_{T^+}),\mathfrak{t}$, where $U_{T}\subseteq U_{T+1}\subseteq\dots \subseteq U_{T^+}$ and $O_{T}\leq O_{T+1}\leq \dots\leq O_{T^+}$, by the construction of $G$. We construct a feasible production plan as follows. For each $t\in [T^+]$ and each $i\in U_t\setminus U_{t-1}$, we set the lead time $l_i=t-i$ and $x_i$, $y_i$ to the values of the optimal solution to~(\ref{subprob}) corresponding to $c_{O_{t}-O_{t-1}}^*(U_{t-1},U_t)$. By the construction of $G$ and~(\ref{subprob}), $\pmb{\ell}\in \mathcal{L}(\Gamma)$, $0\leq x_i\leq b_i$ and $y_i=1$ iff $x_i>0$. Therefore, the production plan $\pmb{x}$ is feasible and its cost is equal to $c(P)$. So, the cost of an optimal solution to~(\ref{probopt}) is at most $c(P)$.
\end{proof}

It remains to show that the resulting CSP problem can be solved in pseudopolynomial time.  Let $C=\sum_{t\in [T]} C_t$ be the total capacity. Graph $G$ has at most  $T^+\cdot T \cdot C$ nodes and at most $T^+\cdot (T \cdot C)^2$ arcs. The network $G$ with all arc-costs can be built in $O(T^+\cdot (T \cdot C)^3)$ time. Finally, the CSP problem can be solved in $O(|A| \Gamma)=O(T^+\cdot (T \cdot C)^2\Gamma)$ time, where $\Gamma \in \{\Gamma^d, \Gamma^c\}$ is a polynomial of $T$ and $T^+$. This leads us to the following theorem.
\begin{thm}
The \textsc{Min-Min} problem is pseudopolynomially solvable under the budgeted uncertainty sets $\mathcal{L}(\Gamma^d)$ and $\mathcal{L}(\Gamma^c)$.
\end{thm}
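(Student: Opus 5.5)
The proof assembles the pieces established just above. First, Proposition~\ref{propint} lets us restrict attention to integral production plans with $x_t\in\{0,1,\dots,C_t\}$ without losing optimality. The argument there fixes the optimal $\pmb{\rho}$ and $\pmb{y}$ in (\ref{mobjop})--(\ref{mop8}) and invokes integrality of min-cost flow. The same argument applies whether the budget constraint is (\ref{mop2_1}) or its $\Gamma^c$ analogue, since fixing $\pmb{\rho}$ removes the budget constraint entirely. Hence the optimal value of (\ref{probopt}) equals the minimum cost over integral pairs $(\pmb{x},\pmb{\ell})$ with $\pmb{\ell}\in\mathcal{L}(\Gamma)$.

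Second, I apply the preceding proposition, which states that an integral pair of cost at most $c$ exists if and only if $G$ contains a feasible $\mathfrak{s}$-$\mathfrak{t}$ path of cost at most $c$. Consequently, a minimum-cost $\mathfrak{s}$-$\mathfrak{t}$ path in the extended layered graph whose total transition time is at most $K=\Gamma$ has cost exactly equal to the \textsc{Min-Min} optimum. Moreover, the reverse direction of that proof is constructive: from such a path we read off $l_i=t-i$ for $i\in U_t\setminus U_{t-1}$, together with the optimal $x_i,y_i$ of the knapsack subproblems (\ref{subprob}). This yields $\pmb{x}^o$ itself, not just its cost.

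Third, and this is the main obstacle, the running time must be bounded. The graph has at most $T^+\cdot T\cdot C$ nodes, where $C=\sum_t C_t$. This holds because each layer contains $O(T)$ sets $U$ (by Proposition~\ref{prop_delta}, they are prefixes $\{1,\dots,k\}$ with $I_t\le k\le J_t$), each with at most $C+1$ cumulative values. There are $O(T^+(TC)^2)$ arcs. Computing each arc cost requires solving (\ref{subprob}), a continuous knapsack with setups, which is solvable in $O(TC)$ by~\cite{FLK80}; building the graph therefore takes $O(T^+(TC)^3)$. The CSP in this acyclic graph is then solved by the $O(|A|K)$ dynamic program of~\cite{H92}.

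The remaining point is that $K$ is polynomial in the input size. We have $\Gamma^d\le T$. For $\Gamma^c$, it suffices to replace $\Gamma^c$ by $\min\{\Gamma^c,T(T^+-T)\}$, since every lead-time deviation is at most $T^+-T$, so this cap leaves $\mathcal{L}(\Gamma^c)$ unchanged. The total time is thus polynomial in $T$, $T^+$, and $C$, i.e.\ pseudopolynomial, which proves the theorem for both $\mathcal{L}(\Gamma^d)$ and $\mathcal{L}(\Gamma^c)$.
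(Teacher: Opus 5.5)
Your proposal is correct and follows the paper's argument essentially step for step: integrality via Proposition~\ref{propint}, the equivalence between integral solutions and budget-feasible paths in the extended layered graph with knapsack-with-setups arc costs, and the same node, arc and construction-time bounds combined with the $O(|A|K)$ CSP algorithm. Your explicit remarks that the integrality argument does not depend on the budget type and that $\Gamma^c$ can be capped at $T(T^+-T)$ spell out points the paper only states briefly.
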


\begin{rem}
The constrained longest path in $G$ does not correspond to the pessimistic production plan~$\pmb{x}^p$, since the integrality assumption does not hold in this case (see Section~\ref{sec:unclead}, Remark~\ref{remint}).
\end{rem}

The above remark shows that a similar pseudopolynomial algorithm cannot be proposed for solving the \textsc{Min-Max} problem
under $\mathcal{L}(\Gamma^d)$ and $\mathcal{L}(\Gamma^c)$.

\subsection{Computing the \textsc{R*} production plan}

We are now ready to present a method for solving the \textsc{R*} problem, 
which is the primary focus of this paper. Corollary~\ref{corcompl} 
shows that this problem, without backordering and with equal capacities, 
is NP-hard under the budgeted uncertainty sets $\mathcal{L}(\Gamma^d)$ 
and $\mathcal{L}(\Gamma^c)$. Furthermore, a pseudopolynomial algorithm 
similar to the one presented in Section~\ref{spsedo} cannot be 
constructed for the \textsc{R*} problem, as it lacks the integrality 
property (see Remark~\ref{remint}). Accordingly, we propose a 
MIP formulation for the \textsc{R*} problem.
The MIP  for computing  the \textsc{R*} production plan~$\pmb{x}^r$ combines the models constructed in Sections~\ref{secpessymplan} and~\ref{secoptymplan1}. 
 It uses two networks: $G'=(V',A')$,  defined in Section~\ref{secpessymplan}, and $G=(V,A)$, described in Section~\ref{secoptymplan1}. The resulting MIP formulation is as follows:
\begin{align}
	\min & \sum_{(u,v)\in A} c(u,v)\cdot f(u,v)+\sum_{t\in [T]} c_t^S y_i \label{mobjR}\\
	\text{s.t.}\; & \displaystyle z_{\mathfrak{t}} + \sum_{t\in T} (c_t^S y_t+c_t^P x_t)\leq B \label{mR1}\\
	 & \text{Constraints}\;  (\ref{wp2})-(\ref{wp3}) \text{ and}~(\ref{wp8}) \label{mR1a}\\  
	& \text{Constraints}\;  (\ref{mop1})-(\ref{mop5}) \\
	  & x_t\leq  C_t y_t & t\in [T]\\     
	        & y_t\in \{0,1\} & t\in [T]\\
	        & \rho_{t,j}\in \{0,1\} & t\in [T], j\in \mathcal{P}_t\\
	        & x_t\geq 0 & t\in [T] \label{mR2} \\
	        & z_{v}\geq 0 & v\in V'. \label{mR3}
\end{align}
If the problem is infeasible (i.e. the constraint~(\ref{mR1}) cannot be satisfied), then we solve the
\textsc{Min-Max} problem~(\ref{probpes}) and choose the pessimistic production plan $\pmb{x}^p$.

\section{Computational tests}
\label{sec:tests}


This section presents the results of computational tests demonstrating that the use of the~\textsc{R*} criterion can significantly enlarge the set of candidate production plans compared to those resulting from solving the $\textsc{Min-Max}$ and $\textsc{Min-Min}$ 
problems. The experiments were implemented using Julia language
 and JuMP framework~(\cite{JuMP}) 
 \footnote{The complete Julia source code, along with detailed instructions, is available on GitHub at 
\href{https://github.com/KatJon/lotsizing-model}{\texttt{github.com/KatJon/lotsizing-model}}.}. 
The solver used for the tests was Gurobi 12.0.1 (\cite{gurobi}). The time limit for each MIP formulation was set to 300 s. However, all the tested instances have been solved to optimality within this time.

\subsection{Data generation}

The procedure for instance generation was inspired by the instances considered by~\cite{TON22} and~\cite{TA17}. The instances were generated for the number of planning periods $T=10$, $T=15$ and the number of future periods $T^+=21$, $T^+=31$, respectively.  All input data are integers, generated uniformly at random in some ranges $[a:b]$. We chose the following ranges for particular parameters:
\begin{itemize}
	\item $c^P_t\in [10:50]$,  $c^S_t\in [1000:5000]$, $t\in [T]$,
	\item $C_t\in [CAP: 1.5 \cdot CAP]$, $t\in [T]$, where $CAP=\frac{(T^+ - T)\cdot750}{T}$,
	\item $d_t \in [75:750]$, $c^I_t\in [5:10]$, $c^B_t\in [50:100]$, $t\in [T^+]$; additionally, the backordering cost in the last period $T^+$ was overridden to a big constant $10^6$,
	\item $\hat{l}=T$,  $\zeta^+_t\in [2:\frac{T}{2}]$, $\zeta^-_t\in [1, \frac{T}{4}]$, $t\in [T]$; the bounds were adjusted to satisfy the assumptions from Section~\ref{sec:uncelead}.
\end{itemize}

In our experiments, we compared the pessimistic (robust), optimistic, and \textsc{R*} production plans resulting from solving the \textsc{Min-Max}, \textsc{Min-Min}, and \textsc{R*} problems, respectively (see (\ref{probopt}), (\ref{probpes}), and (\ref{probr})). 
To solve these problems, we employed the models (\ref{wp1})-(\ref{wp8}) for \textsc{Min-Max}, (\ref{mobjop})-(\ref{mop8}) for
 \textsc{Min-Min}, and (\ref{mobjR})-(\ref{mR3}) for~\textsc{R*}.

In the case of \textsc{R*} we chose $B=\alpha \cdot \overline{C}$, where $\overline{C}$ is the optimal objective value of the \textsc{Min-Max} problem. We fixed $\alpha\in [1,1.5]$ and denoted the resulting instance as \textsc{R*}$(\alpha)$.  Furthermore, we only tested the budget $\Gamma^d$, whose value was chosen from the range $[1:10]$. We used $\Gamma^d_1=\Gamma^d_2=\Gamma^d$ in~(\ref{probr}). We also considered the nominal production plans in which the lead times $l_t=\hat{l}$ for each $t\in [T]$.

\subsubsection{Analysis of the costs of the production plans}

Let $\pmb{x}=(x_t)_{t\in [T]}$ be a production plan resulting from solving the problem $\textsc{Min-Max}$, $\textsc{Min-Min}$, or $\textsc{R}^*$. To evaluate $\pmb{x}$ we performed a Monte Carlo simulation. We assumed that each lead-time $l_t$, $t\in [T]$, is uniformly distributed in its range $[\hat{l}_t-\underline{\zeta}_t: \hat{l}_t+\overline{\zeta}_t]$. We generated a large sample of lead-time scenarios in $\mathcal{L}$ using the rejection sampling method (infeasible lead-time scenarios that did not meet the assumption of no order crossover were rejected). We created an empirical distribution for the costs of $\pmb{x}$. In the absence of other information, this distribution is a complete characterization of the quality of $\pmb{x}$. Having a family of production plans with such empirical distributions, decision-makers can ultimately choose one of them taking into account their preferences. 

The empirical distributions for sample instances are shown in Figure~\ref{fig:RG1_combined_10_21}  and Figure~\ref{fig:RG1_combined_15_31}.  The corresponding statistics for these distributions are shown in Table~\ref{tab:combined-stats-10-21} and Table~\ref{tab:combined-stats-15-31}.
 Let us first compare the \textsc{Min-Max} (pessimistic) and \textsc{Min-Min} (optimistic) production plans
 $\pmb{x}^p$ and $\pmb{x}^o$, respectively (the first and last row of histograms). As one can expect, the \textsc{Min-Max} production plans are less risky (they have smaller maximum costs) and their distributions have smaller standard deviations. They also have a smaller mean value and median. On the other hand, the \textsc{Min-Min} production plans allow to reach much smaller costs in favorable scenarios, so they cannot be completely excluded from considerations.  The behavior of the \textsc{R*}-production plans $\pmb{x}^r$ is intermediate between the two extreme cases. Depending on the choice of $\Gamma^d$ and $B$ we get production plans that have smaller maximum cost than the \textsc{Min-Min} plans, but also smaller minimum cost than the \textsc{Min-Max} plans. 
 Therefore, by solving
 the \textsc{R*} problem while varying $\Gamma^d$ and $B$, we expand the set of candidate production plans. It is also interesting to compare the \textsc{Min-Max} and \textsc{R*}-production plans to the nominal plan. One can see in Tables~\ref{tab:combined-stats-10-21} and~\ref{tab:combined-stats-15-31} that the nominal plan is more risky, as its maximum cost and also 0.9-quantile and standard deviation, can be much larger than for the \textsc{Min-Max} and \textsc{R*}-production plans.
\begin{figure}
	\centering
	\includegraphics[width=\textwidth]{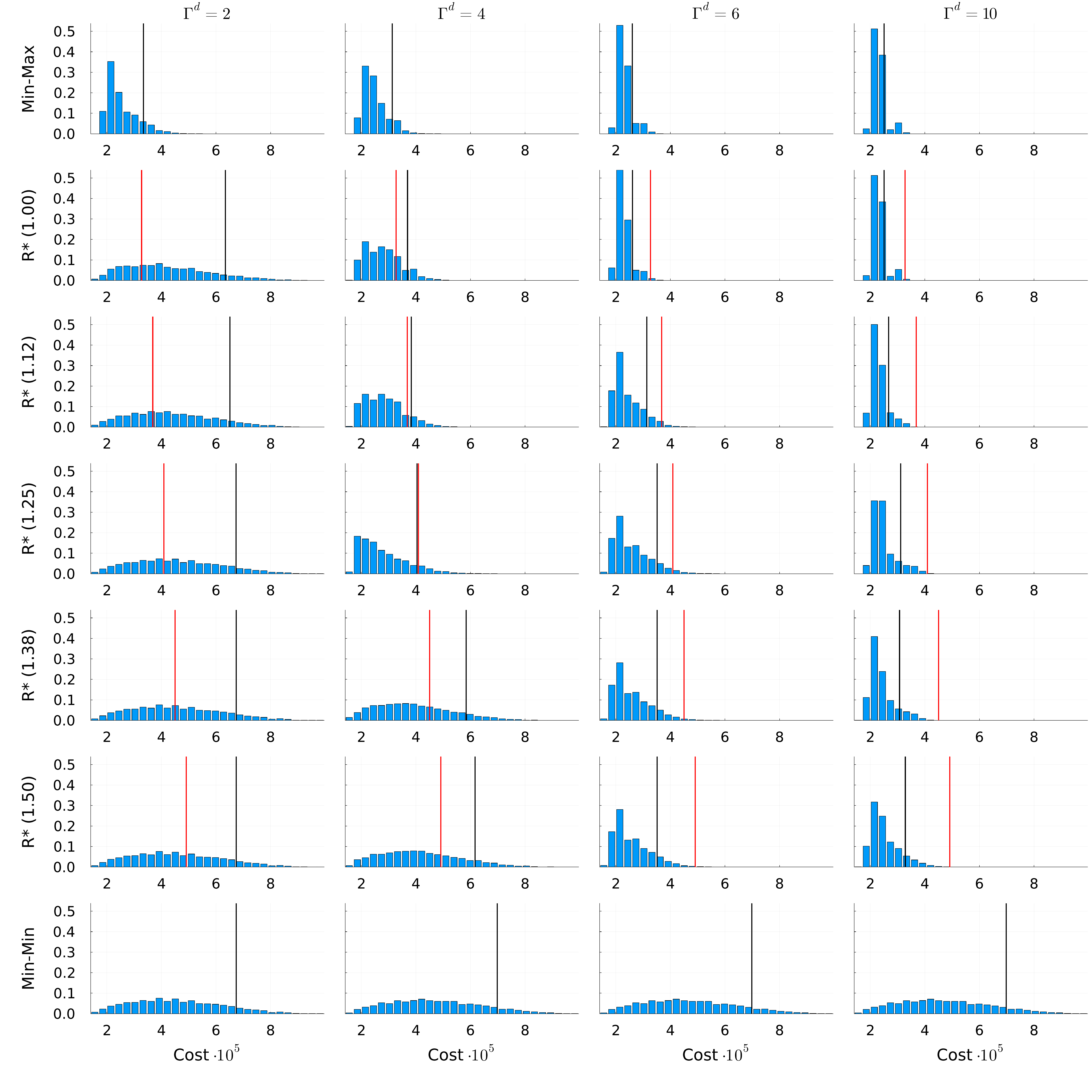}
	\caption{Histograms with the cost distributions of production plans for an instance with $T=10$ and $T^+=21$. The red line represents the budget $B$ and the black line represent the $0.9$-quantile for the cost.}
	\label{fig:RG1_combined_10_21}
\end{figure}
\begin{small}
\begin{table}
	\centering
	\caption{Statistical metrics (the minimum, maximum, mean, median, standard deviation and 0.9-quantile) for the histograms presented in Figure~\ref{fig:RG1_combined_10_21}.}
	\label{tab:combined-stats-10-21}

	{
\small
		\begin{tabular}{ l r *{6}{ |r } }
			\theadl{Problem} & 
			\theadl{$\Gamma^d$} &
			\theadl{Min} &
			\theadl{Max} &
			\theadl{Mean} &
			\theadl{Median} &
			\theadl{Std. dev.} &
			\theadl{0.9-quant} \\
			\hline
			$\textsc{Nominal}$
 & - & 152439.00 & 882468.00 & 399795.49 & 385776.00 & 137522.66 & 594710.00 \\
\hline
\hline
$\textsc{Min-Max}$
 & 2 & 177619.85 & 549088.66 & 252426.35 & 233905.29 & 55544.00 & 334306.30 \\
 & 4 & 176465.95 & 465912.71 & 247526.69 & 238351.85 & 42363.20 & 313333.58 \\
 & 6 & 186942.37 & 368287.02 & 232639.66 & 227272.25 & 25471.09 & 260676.13 \\
 & 10 & 189979.96 & 327500.21 & 231943.66 & 228307.35 & 23567.75 & 251034.21 \\
\hline
\hline
$R^*(1.00)$
 & 2 & 148651.91 & 949719.24 & 419479.18 & 397230.39 & 151554.13 & 634318.50 \\
 & 4 & 162897.40 & 556242.42 & 277741.96 & 270827.49 & 64933.51 & 369439.78 \\
 & 6 & 182986.80 & 363815.37 & 230266.83 & 224977.42 & 26252.21 & 261357.37 \\
 & 10 & 189979.96 & 327500.21 & 231943.66 & 228307.35 & 23567.75 & 251034.21 \\
\hline
\hline
$R^*(1.12)$
 & 2 & 146003.92 & 945319.57 & 436515.54 & 422744.74 & 153367.63 & 650930.05 \\
 & 4 & 159762.00 & 581253.00 & 282929.07 & 273810.00 & 70900.06 & 383371.00 \\
 & 6 & 165100.45 & 519890.30 & 242092.74 & 224412.74 & 50344.38 & 313967.64 \\
 & 10 & 178572.91 & 368437.74 & 231850.66 & 226717.48 & 27908.39 & 267470.09 \\
\hline
\hline
$R^*(1.25)$
 & 2 & 143434.62 & 972211.58 & 455020.37 & 441554.35 & 159867.01 & 673421.26 \\
 & 4 & 157472.00 & 675918.00 & 277769.85 & 256360.00 & 85436.14 & 404332.00 \\
 & 6 & 156554.00 & 578045.00 & 257136.97 & 238254.00 & 65006.07 & 351406.00 \\
 & 10 & 171349.31 & 409375.26 & 247827.40 & 235520.37 & 42267.78 & 311496.66 \\
\hline
\hline
$R^*(1.38)$
 & 2 & 143402.00 & 972553.00 & 455255.30 & 441687.00 & 159957.44 & 673941.00 \\
 & 4 & 152610.13 & 880927.79 & 392168.80 & 376772.15 & 136216.20 & 584303.73 \\
 & 6 & 156554.00 & 578045.00 & 257136.97 & 238254.00 & 65006.07 & 351406.00 \\
 & 10 & 166036.00 & 444287.00 & 241258.29 & 228676.00 & 44603.20 & 307518.00 \\
\hline
\hline
$R^*(1.50)$
 & 2 & 143402.00 & 972553.00 & 455255.30 & 441687.00 & 159957.44 & 673941.00 \\
 & 4 & 150335.00 & 906492.00 & 415151.61 & 401608.00 & 143101.08 & 616711.00 \\
 & 6 & 156554.00 & 578045.00 & 257136.97 & 238254.00 & 65006.07 & 351406.00 \\
 & 10 & 164283.36 & 491250.32 & 251986.99 & 238202.64 & 51414.07 & 328548.33 \\
\hline
\hline
$\textsc{Min-Min}$
 & 2 & 143402.00 & 972553.00 & 455255.30 & 441687.00 & 159957.44 & 673941.00 \\
 & 4 & 141298.00 & 996577.00 & 470611.42 & 456543.00 & 164686.83 & 698250.20 \\
 & 6 & 141298.00 & 996577.00 & 470611.42 & 456543.00 & 164686.83 & 698250.20 \\
 & 10 & 141298.00 & 996577.00 & 470611.42 & 456543.00 & 164686.83 & 698250.20
			\unskip
		\end{tabular}
	}
\end{table}
\end{small}
\begin{figure}
	\centering
	\includegraphics[width=\textwidth]{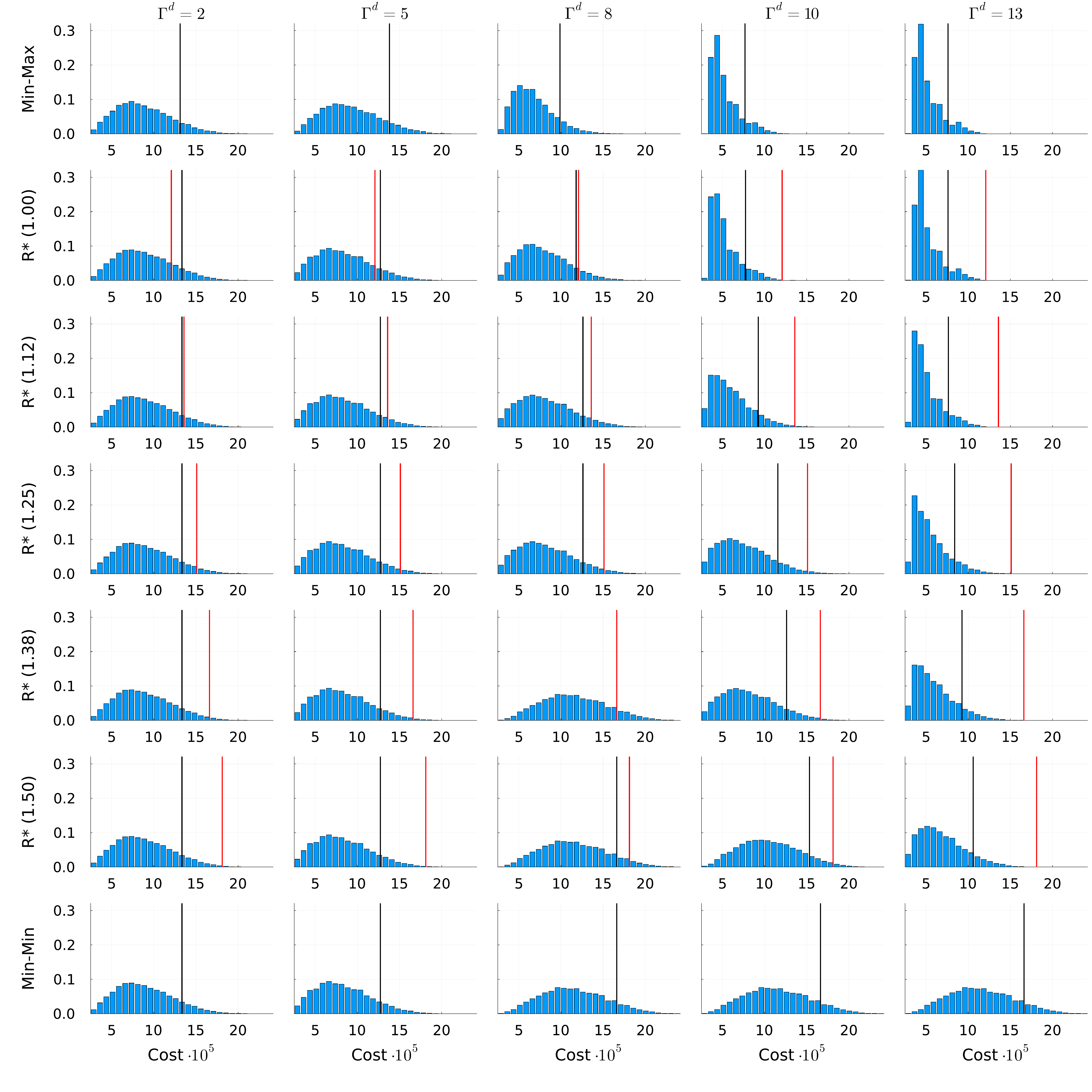}
	\caption{Histograms with the cost distributions of production plans for an instance with $T=15$ and $T^+=31$. The red line represents the budget $B$ and the black line represent the $0.9$-quantile for the cost.}
	\label{fig:RG1_combined_15_31}
\end{figure}
\begin{small}
\begin{table}
	\centering
	\caption{Statistical metrics (the minimum, maximum, mean, median, standard deviation and 0.9-quantile) for the histograms presented in Figure~\ref{fig:RG1_combined_15_31}.}
	\label{tab:combined-stats-15-31}

	{
\small
		\begin{tabular}{ l r *{6}{ |r } }
			\theadl{Problem} & 
			\theadl{$\Gamma^d$} &
			\theadl{Min} &
			\theadl{Max} &
			\theadl{Mean} &
			\theadl{Median} &
			\theadl{Std. dev.} &
			\theadl{0.9-quant} \\
			\hline
			$\textsc{Nominal}$
 & - & 251816.00 & 2095397.00 & 888327.59 & 852974.50 & 325844.53 & 1337602.10 \\
\hline
\hline
$\textsc{Min-Max}$
 & 2 & 253987.01 & 2075191.05 & 871908.29 & 834977.11 & 318867.38 & 1312814.91 \\
 & 5 & 254840.63 & 2117971.84 & 916687.14 & 880564.71 & 333317.94 & 1379948.42 \\
 & 8 & 286201.00 & 1741463.00 & 668204.10 & 632921.50 & 226932.47 & 988407.50 \\
 & 10 & 312704.47 & 1269728.80 & 525794.03 & 471861.85 & 163269.98 & 768383.01 \\
 & 13 & 318785.50 & 1186392.00 & 519458.32 & 460213.59 & 159228.56 & 760241.12 \\
\hline
\hline
$R^*(1.00)$
 & 2 & 251504.00 & 2091913.00 & 886615.91 & 850451.50 & 324902.08 & 1334791.70 \\
 & 5 & 254720.00 & 1965633.00 & 828443.33 & 790921.00 & 318406.41 & 1273498.70 \\
 & 8 & 270320.99 & 1959124.92 & 785164.59 & 747611.36 & 287542.45 & 1179925.43 \\
 & 10 & 307278.09 & 1324871.49 & 523978.54 & 474363.05 & 167829.79 & 773824.80 \\
 & 13 & 318214.94 & 1186465.23 & 519882.18 & 460895.58 & 159184.84 & 760761.34 \\
\hline
\hline
$R^*(1.12)$
 & 2 & 251504.00 & 2091913.00 & 886615.91 & 850451.50 & 324902.08 & 1334791.70 \\
 & 5 & 254720.00 & 1965633.00 & 828443.33 & 790921.00 & 318406.41 & 1273498.70 \\
 & 8 & 250895.00 & 1964040.00 & 816242.89 & 777127.00 & 317464.58 & 1260862.40 \\
 & 10 & 271670.00 & 1659392.00 & 602053.21 & 554402.00 & 226919.59 & 924770.40 \\
 & 13 & 294608.31 & 1314094.26 & 514661.94 & 459964.88 & 168803.39 & 764497.61 \\
\hline
\hline
$R^*(1.25)$
 & 2 & 251504.00 & 2091913.00 & 886615.91 & 850451.50 & 324902.08 & 1334791.70 \\
 & 5 & 254720.00 & 1965633.00 & 828443.33 & 790921.00 & 318406.41 & 1273498.70 \\
 & 8 & 250895.00 & 1964040.00 & 816242.89 & 777127.00 & 317464.58 & 1260862.40 \\
 & 10 & 254101.00 & 1894771.00 & 746543.71 & 703938.00 & 294093.98 & 1157249.30 \\
 & 13 & 287445.94 & 1520735.73 & 550565.25 & 500581.36 & 196580.76 & 839501.68 \\
\hline
\hline
$R^*(1.38)$
 & 2 & 251504.00 & 2091913.00 & 886615.91 & 850451.50 & 324902.08 & 1334791.70 \\
 & 5 & 254720.00 & 1965633.00 & 828443.33 & 790921.00 & 318406.41 & 1273498.70 \\
 & 8 & 275408.00 & 2417560.00 & 1149146.57 & 1125572.00 & 377100.06 & 1662264.40 \\
 & 10 & 250895.00 & 1964040.00 & 816242.89 & 777127.00 & 317464.58 & 1260862.40 \\
 & 13 & 275731.72 & 1636611.23 & 601757.27 & 550109.10 & 227582.92 & 926059.63 \\
\hline
\hline
$R^*(1.50)$
 & 2 & 251504.00 & 2091913.00 & 886615.91 & 850451.50 & 324902.08 & 1334791.70 \\
 & 5 & 254720.00 & 1965633.00 & 828443.33 & 790921.00 & 318406.41 & 1273498.70 \\
 & 8 & 275408.00 & 2417560.00 & 1149146.57 & 1125572.00 & 377100.06 & 1662264.40 \\
 & 10 & 280874.00 & 2201195.00 & 1051427.98 & 1025553.50 & 351735.85 & 1532093.70 \\
 & 13 & 262468.00 & 1749560.00 & 686730.17 & 639779.50 & 263922.92 & 1059096.70 \\
\hline
\hline
$\textsc{Min-Min}$
 & 2 & 251504.00 & 2091913.00 & 886615.91 & 850451.50 & 324902.08 & 1334791.70 \\
 & 5 & 254720.00 & 1965633.00 & 828443.33 & 790921.00 & 318406.41 & 1273498.70 \\
 & 8 & 275408.00 & 2417560.00 & 1149146.57 & 1125572.00 & 377100.06 & 1662264.40 \\
 & 10 & 275408.00 & 2417560.00 & 1149146.57 & 1125572.00 & 377100.06 & 1662264.40 \\
 & 13 & 275408.00 & 2417560.00 & 1149146.57 & 1125572.00 & 377100.06 & 1662264.40
			\unskip
		\end{tabular}
	}
\end{table}
\end{small}

\subsubsection{Two-criteria analysis}

To demonstrate the advantage of the \textsc{R*} approach, we performed a two-criteria analysis. For each production plan $\pmb{x}$ we computed its minimum and maximum cost in the lead-time scenario set $\mathcal{L}$. These quantities and are listed in Table~\ref{tab:combined-stats-10-21} (the columns Min and Max). Figure~\ref{fig:minmax_pareto_10_21} shows these values 
for the \textsc{Min-Min},  \textsc{Min-Max} and \textsc{R*}-production plans, for various $\Gamma^d$. For clarity, we only show the points for instances:
 \textsc{R*}$(1.0)$,  \textsc{R*}$(1.25)$ and  \textsc{R*}$(1.5)$. The shape of the symbols represents instances and the colors represent the values for $\Gamma^d$.
\begin{figure}
	\centering
	\includegraphics[width=\textwidth]{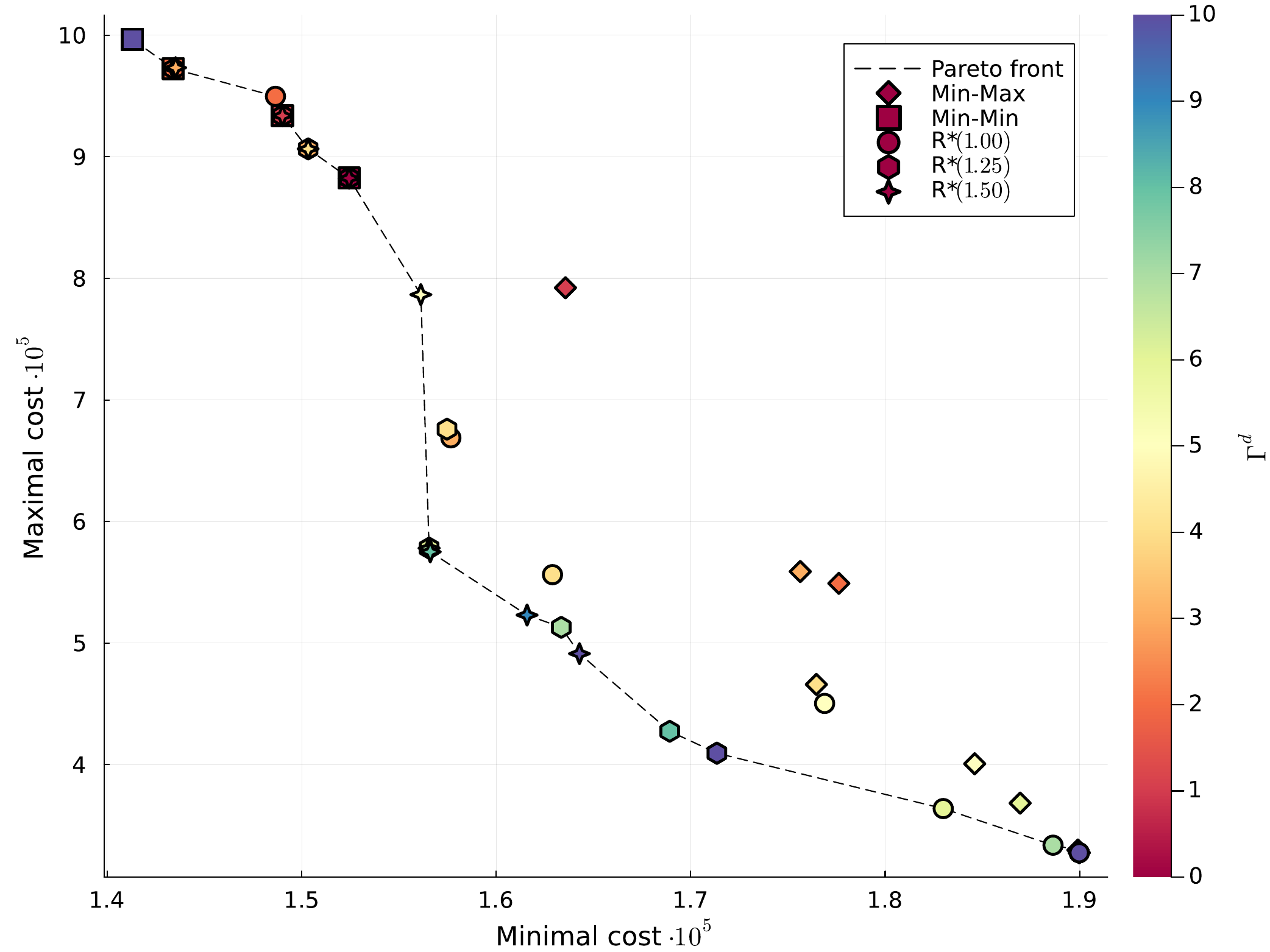}
	\caption{The minimal and maximal  costs of production plans for the instances with $T=10$ and $T^+=21$. The dashed line represents the Pareto front. The colors represent the budget~$\Gamma^d$.}
	\label{fig:minmax_pareto_10_21}
\end{figure}
\begin{figure}
	\centering
	\includegraphics[width=\textwidth]{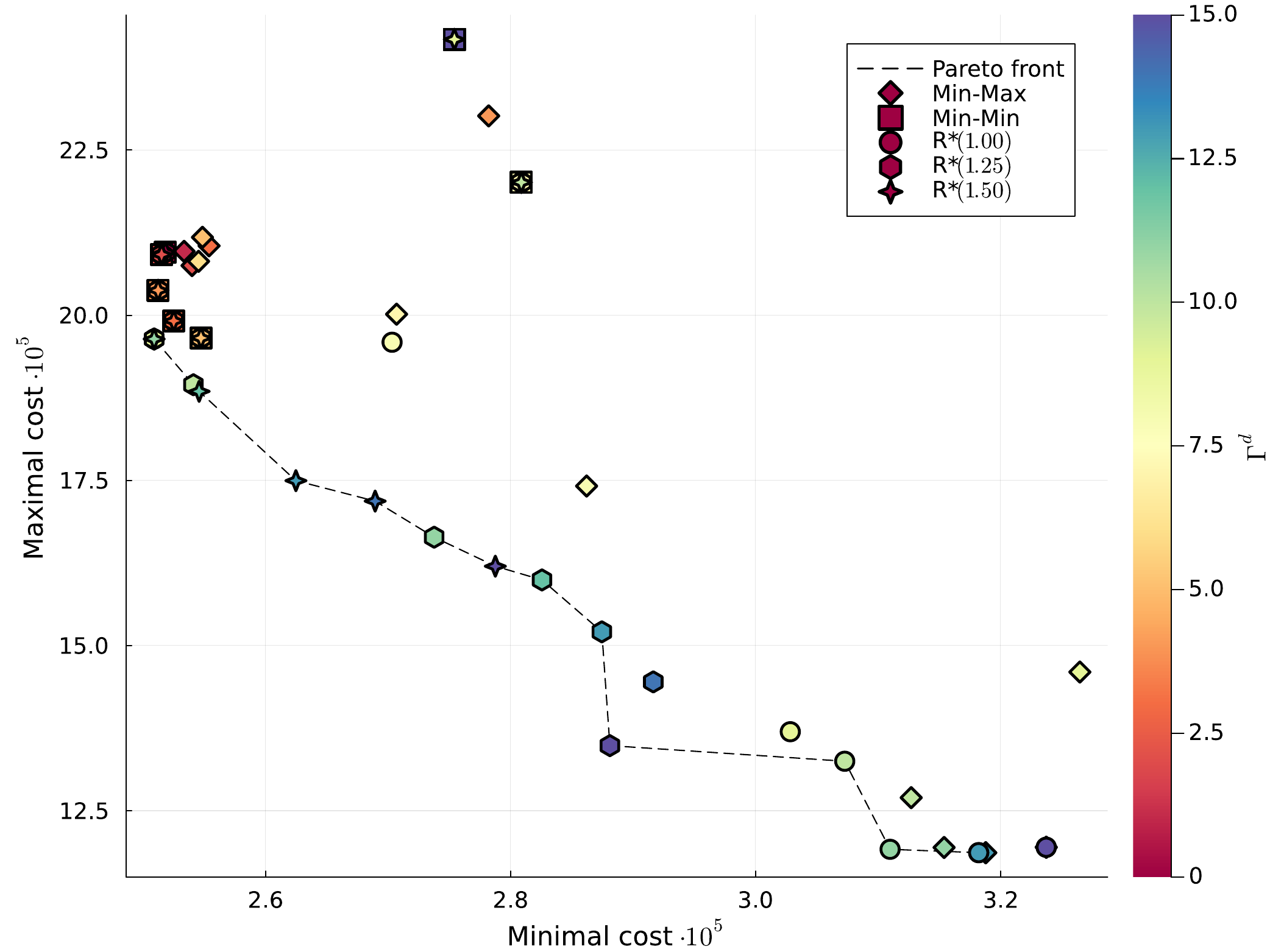}
	\caption{The minimal and maximal  costs of production plans for the instances with $T=15$ and $T^+=31$. The dashed line represents the Pareto front. The colors represent the budget~$\Gamma^d$.}
	\label{fig:minmax_pareto_15_31}
\end{figure}

The solutions to the \textsc{Min-Min} problem, represented as squares, are clustered in the top-left quadrant. They have small minimum, but large maximum cost. Note that they are located on the Pareto front. Therefore, they should be taken into account by optimistic decision-makers. On the other hand,  the solutions to the  \textsc{Min-Max} problem, represented as rotated squares, explore primarily the bottom-right quadrant of the figure. Interestingly all of them, except for the one with $\Gamma^d=10$, are beyond the Pareto front. Hence, there are other solutions that dominated them if we only look at the extreme values for  the costs. The solutions to the \textsc{R*}  problem clearly expand the Pareto front. They are not achievable by solving both the  \textsc{Min-Min} and  \textsc{Min-Max} problems. Hence, the   \textsc{R*} approach significantly expands the number of candidate production plans. 
The ultimate choice of production plan depends on other factors such as decision-maker attitude towards-risk or available budget. Observe also that for the instance with $T=15$ and $T^+=31$, the nominal production plan is not on the Pareto front (for example, the  \textsc{R*}$(1.00)$-production plan with $\Gamma^d=5$ has smaller both the minimum and the maximum cost).

\section{Conclusion}

In this paper we have discussed a single-item  lot sizing problem with uncertain lead-times. We used a simple model of uncertainty in which the lead-times can take some values within specified ranges. We have imposed two assumptions on possible lead-time scenarios. Firstly, all planning periods have the same nominal common lead-time. Secondly, only lead-time scenarios without crossing of orders are allowed. Both assumptions are realistic in practical applications. Using them, we can construct polynomial-time algorithms for computing the best (optimistic) and the worst (pessimistic) lead-time scenario for a given production plan. We can also solve the \textsc{Min-Max} problem in polynomial time (if the setup costs are 0) and the \textsc{Min-Min} problem in pseudopolynomial time. For the more general \textsc{R*} problem, a compact MIP formulation can be constructed. The main motivation for using the \textsc{R*} criterion is in extending the set of candidate production plans. By performing some experiments, we showed that solving
the \textsc{R*} problem can give us solutions that are not achievable by the traditional robust min-max approach.

There are several interesting open problems regarding the complexity of the  model considered. For example, we can solve the \textsc{Min-Min} problem in pseudopolynomial time, but the complexity of this problem when the setup costs are 0 is open (we do not know if its is weakly NP-hard in this case).  For the \textsc{Min-Max} problem with setup-costs a compact MIP formulation exists. The problem is NP-hard, but it is not known whether it can be solved in pseudopolynomial time. It could  also be interesting to extend the problem to many suppliers, like in~\cite{TON22}.

\subsubsection*{Acknowledgements}
Adam Kasperski, Szymon Wróbel and Paweł Zieliński were supported by
 the National Science Center, Poland, grant 2022/45/B/HS4/00355.


\appendix

\section{Appendix}
\label{dod}

\begin{proof}[The proof of Theorem~\ref{thmcompl}]

We begin by recalling the \textsc{Balanced Partition} problem, which is NP-complete~\cite{GJ79}.
 Let $\pmb{a}=(a_1,a_2,\dots,a_n)$ be a collection of nonnegative integers with even $n\geq 2$ and $S=\frac{1}{2}\sum_{i\in [n]} a_i$. We ask if there is a subset of indices $I\subseteq [n]$ of size $|I|= n/2$, such that $\sum_{i\in I} a_i =S$. 

We show a reduction from an instance of \textsc{Balanced Partition} problem to an instance of the problem~\textsc{R*}.
Let $n\geq 2$ and $\pmb{a}=(a_1,a_2,\dots,a_n)$ be an instance of \textsc{Balanced Partition}. Assume w.l.o.g. that $a_1\leq a_2\leq \dots \leq a_n$. Let us construct the following instance of the problem~\textsc{R*} (see~(\ref{probr}))  under the lead-time uncertainty set~$\mathcal{L}$:
\begin{itemize}
	\item $T=n$, $T^+=2n+1$, so there are $n$ planning periods and $n+1$ future periods,
	\item $c^P_t=0$, $C_t=1$, $c^S_t=a_t$ for $t\in [n]$,
	\item $c^I_{n+i}=2a_{i+1}-2a_{i}$ for $i\in [n-1]$, $c^I_{2n}=2S-2a_n$, $c^I_{2n+1}=1$,
	\item $\widehat{l}=n$, $l_t\in \{n\dots,2n+1-t\}$, $t\in [n]$, lead-dime uncertainty set is $\mathcal{L}$,
	\item $d_t=0$ for $t=n+1,\dots, 2n$ and $d_{2n+1}=\frac{n}{2}$,
	\item $B=(n-1)S$.
\end{itemize}
An example of the reduction  is shown in Figure~\ref{figcompl}.
\begin{figure}[ht]
	\centering
	\includegraphics[height=6cm]{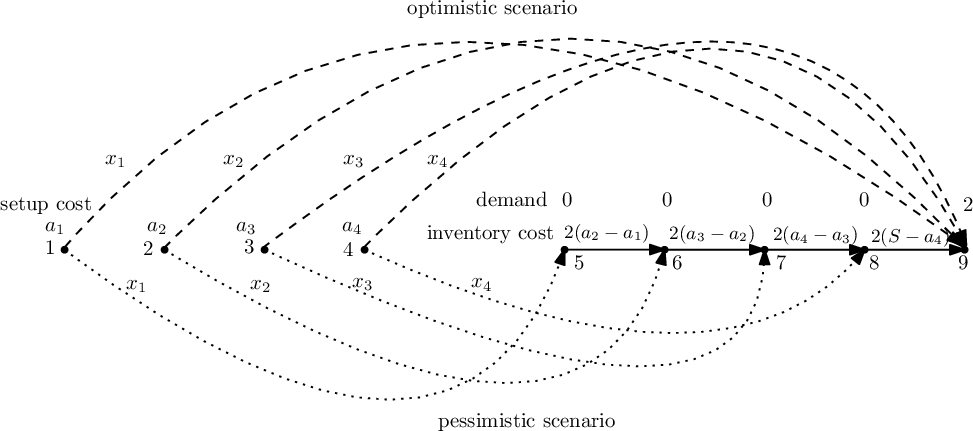}
	\caption{A sample reduction for $\pmb{a}=(a_1,a_2,a_3,a_4)$, $c^P_t=0$ for $t=1,\dots,4$ and $x_t\in[0,1]$ for $t=1,\dots,4$.
	The dashed arcs correspond to the optimistic lead-time scenario $\pmb{\ell}^o=(8,7,6,5)$ and the dotted arcs correspond to the pessimistic lead-time scenario $\pmb{\ell}^p=(4,4,4,4)$.  } \label{figcompl}
\end{figure}

Observe that, in the constructed instance, there are no backordering costs and  there is a constant capacity equal to~1 in every period $t\in [T]$. The cumulative demand $D_t=0$ for $t=n+1,\dots,2n$ and $D_{2n+1}=\frac{n}{2}$. Also, in any optimal solution $I_{2n+1}=0$.
Given a lead-time scenario $\pmb{\ell}=(l_1,\dots,l_n)\in \mathcal{L}$, we can rewrite~(\ref{spp1})-(\ref{spp6}) as
\begin{align}
 \min &\sum_{t\in [n]} c_t^S y_t+\sum_{t=n+1}^{2n} c^{I}_t I_{t}&\label{spp1b}\\ 
\text{ s.t. }& \sum_{i\in\{i\in [n]: i+l_i\leq t\}} x_{i}=I_t & t=n+1,\dots,2n \label{spp2b}\\
& \sum_{t\in [n]} x_{t}=\frac{n}{2} \\
& 0\leq x_{t}\leq y_t & t\in [n] \label{spp4b}\\
        &I_{t}\geq 0  & t=n+1,\dots,2n+1\label{spp5b}\\
        &y_{t}\in \{0,1\}  & t\in [n].\label{spp6b}
 \end{align}       
The above model can be further simplified by substituting $I_t$, given by~(\ref{spp2b}) into~(\ref{spp1b}).
It is easy to see that $\sum_{t\in [n]} y_t\geq \frac{n}{2}$, since $C_t=1$ for each $t\in [n]$. 
Clearly,  the best lead-time scenario for any production plan $\pmb{x}$ is $\pmb{\ell}^o=(2n, 2n-1,\dots,n)$, that is when the production for each period $t\in [n]$ arrives at the last period $2n+1$. 
Note that no inventory cost is then incurred, and the total cost of $\pmb{x}$ is simply its setup cost.
$$\sum_{t\in [n]} c^S_t y_t=\sum_{t\in [n]} a_t y_t.$$
On the other hand, the worst scenario for any $\pmb{x}$ is $\pmb{\ell}^p=(n,n,\dots,n)$, that is when  production arrives at the earliest possible future period. Observe that all lead times are equal to the nominal lead time $n$. The inventory cost of $\pmb{x}$ is then
\begin{align*}
\sum_{t=n+1}^{2n} c^{I}_t \sum_{i\in\{i\in [n]: i+n\leq t\}} x_{i}&=\sum_{t=n+1}^{2n} c^{I}_t \sum_{i=1}^{t-n} x_i=\sum_{t=1}^{n} c^{I}_{n+t} \sum_{i=1}^{t} x_i\\
&=\sum_{t=1}^{n}x_t\sum_{i=t}^{n} c^{I}_{n+i}=\sum_{t=1}^{n}x_t\left(\sum_{i=t}^{n-1} (2a_{i+1}-2a_{i})+2(S-a_n)\right)\\
& =\sum_{t\in [n]} 2(S-a_t)x_t.
\end{align*}
Therefore, the overall setup and inventory cost under the pessimistic scenario becomes
$$\sum_{t\in [n]} a_ty_t+\sum_{t\in [n]} 2(S-a_t)x_t.$$
Hence, a model of the problem~\textsc{R*} (see~(\ref{probr})), for the constructed  instance,  takes the following form:
$$
\begin{array}{llll}
	\min & \displaystyle \sum_{t\in [n]} a_t y_t \\
		\text{s.t.}& \displaystyle\sum_{t\in [n]} a_t y_t+\sum_{t\in [n]} 2(S-a_t)x_t\leq B\\
		&\displaystyle \sum_{t\in [n]} x_t= \frac{n}{2}\\
		& 0\leq x_t \leq y_t & t\in [n]\\
		& y_t\in \{0,1\} & t\in [n].
\end{array}
$$
Substituting $B=(n-1)S$ yields the following equivalent reformulation:
\begin{align}
	\min & \displaystyle \sum_{t\in [n]} a_t y_t \label{red0}\\
	\text{s.t. }\;	& \displaystyle 2\sum_{t\in [n]} a_t x_t-\sum_{t\in [n]} a_ty_t \geq S \label{red1}\\
		&\displaystyle \sum_{t\in [n]} x_t= \frac{n}{2} \label{red001}\\
		& 0\leq x_t \leq y_t & t\in [n] \label{red002}\\
		& y_t\in \{0,1\} & t\in [n]. \label{red2}
\end{align}

The objective value is at least $S$ for every feasible solution $(\pmb{x}, \pmb{y})$ to~(\ref{red0})-(\ref{red2}). Indeed, suppose that there exists a solution $(\pmb{x}, \pmb{y})$ whose objective value is smaller than $S$. Then $\sum_{t \in [n]} a_t x_t \leq \sum_{t \in [n]} a_t y_t$ and $2 \sum_{t \in [n]} a_t x_t - \sum_{t \in [n]} a_t y_t \leq \sum_{t \in [n]} a_t y_t < S$, which implies that the cost-budget constraint~(\ref{red1}) is violated. Also, in any optimal solution there is exactly $n/2$ positive variables $y_i$, $i\in [n]$. Indeed the number of positive $y_i$ cannot be less than $n/2$ due to~(\ref{red001}) and~(\ref{red002}). To see that it also cannot be greater than $n/2$, consider a  feasible  $\pmb{y}$. We can now find a feasible $\pmb{x}$ by assigning $x_i=1$ to $n/2$ variables $x_i$ with largest $a_i$ such that $y_i=1$. If the number of positive $y_i$ is greater than $n/2$, then we can decrease the objective value by assigning $y_i=0$ to some of them (recall that all numbers $a_i$ are positive).

It remains to show that the answer to an instance of \textsc{Balanced Partition} is ``yes'' if and only if the optimal objective value in model~(\ref{red0})–(\ref{red2}) (i.e., the model of problem~\textsc{R*}, see~(\ref{probr}), for the constructed instance) is equal to $S$. Suppose that the answer to an instance of \textsc{Balanced Partition} is ``yes''. Consider a solution in which $y_i = x_i=1$ if $i \in I$, $|I|= n/2$. 
 This solution is feasible for~(\ref{red0})-(\ref{red2}), and its objective value equals $S$, which is therefore optimal.
 Consider an optimal solution to~(\ref{red0})-(\ref{red2}) with the objective value of $S$. The number of positive $y_i$ in this solution is $n/2$. Therefore, the answer to \textsc{Balanced Partition} is "yes" due to~(\ref{red0}) and~(\ref{red2}). 
 Thus, the problem~\textsc{R*} is NP-hard. We have proved the first part of the theorem.
  
 To prove the second part of the theorem, it suffices to modify the reduction by setting $c_t^P = a_t$ and $c_t^S = 0$ for $t \in [n]$. The remaining part of the reduction remains unchanged. Then $x_t \in \{0,1\}$ for each $t \in [T]$. The proof is then almost identical to the previous one.
\end{proof}

\begin{proof}[The proof of Theorem~\ref{thmminmin}]
We will use a reduction from the following \textsc{Partition} problem, which is NP-complete~\cite{GJ79}. Let $\pmb{a}=(a_1,a_2,\dots,a_n)$ be a collection of nonnegative integers and $S=\frac{1}{2}\sum_{i\in [n]} a_i$. We ask if there is a subset of indices $I\subseteq [n]$, such that $\sum_{i\in I} a_i =S$.
Let $n$ and 
$\pmb{a}=(a_1, a_2, \dots, a_n)$ be an instance of \textsc{Partition}. 
The construction of an instance of the problem~(\ref{wleadprob})
under the lead-time uncertainty set~$\mathcal{L}$, with the no order 
crossover assumption~(\ref{clun2}) relaxed, is as follows:
\begin{itemize}
	\item $T=n$, $T^+=2n+1$, so there are $n$ planning periods and $n+1$ future periods,
	\item $x_t=a_t$ for $t\in [n]$,
	\item $c^I_{n+i}=0$ for $i\in [n-1]$, $c^I_{2n}=M$, $c^I_{2n+1}=0$,
	\item $\widehat{l}=n$, $l_t\in \{n\dots,2n+1-t\}$, $t\in [n]$, 
	\item $d_t=0$ for $t=n+1,\dots, 2n-1$,  $d_{2n}=S$  and $d_{2n+1}=S$,
\end{itemize}
where $M=2S$. A sample reduction is shown in Figure~\ref{figcompl1}.

\begin{figure}[ht]
	\centering
	\includegraphics[height=5cm]{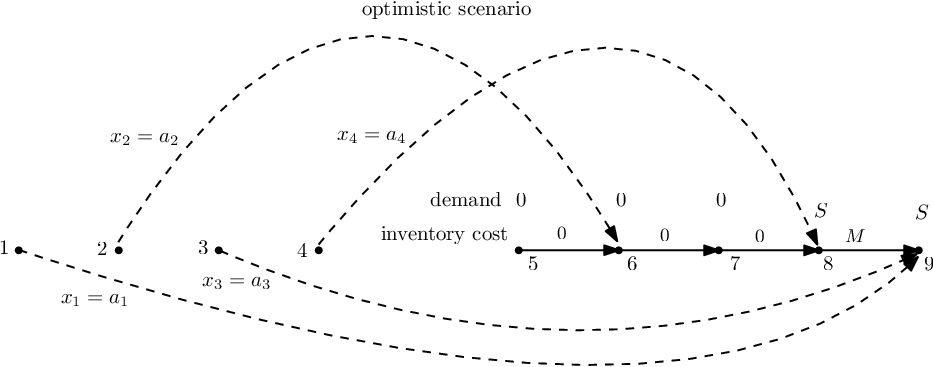}
	\caption{A sample reduction for $\pmb{a}=(a_1,a_2,a_3,a_4)$.
	The dashed arcs correspond to the optimistic lead-time scenario $\pmb{\ell}^o=(8,4,6,4)$.} \label{figcompl1}
\end{figure}

We now prove that the answer to an instance of \textsc{Partition} is ``yes'' if and only if 
the optimal objective value in  the problem~(\ref{wleadprob}), for the constructed instance, is equal to~$0$.

Assume that the answer to an instance of  \textsc{Partition} is ``yes'', i.e.
there is $I\subseteq [n]$, such that $\sum_{t\in I} a_t =S$. 
For the production plan $x_t=a_t$ for $t\in [n]$,
 we form an 
optimistic lead-time scenario~$\pmb{\ell}^{o}$ as follows: 
$l^{o}_t=n$ if $t \in I$ and  production~$a_t$ arrives at period~$n+t$
to meet demand $d_{2n}$ with zero inventory holding cost; 
 $l^{o}_t=2n+1-t$ if $t \not\in I$ and 
production~$a_t$ directly arrives at period~$2n+1$
 to meet demand $d_{2n+1}$ and no inventory holding costs are incurred. 
 Clearly, demands $d_{2n}$ and $d_{2n+1}$ are 
met  exactly at level~$S$.
Thus, the optimal objective value of problem~(\ref{wleadprob}) is 
equal to~$0$.

Suppose that $\pmb{\ell}^{o}$ is an optimistic lead-time scenario for 
problem~(\ref{wleadprob}), for which the  objective value is~$0$ 
under the production plan~$\pmb{x}$ such that $\sum_{t\in [n]}x_t = 2S$. 
This implies that zero inventory holding costs are incurred for~$\pmb{x}$. 
Consequently, no inventory is carried  from periods~$2n$ to~$2n+1$ at 
the high cost~$M$ to meet the demands $d_{2n}$ and $d_{2n+1}$ at level~$S$. 
This means that no single production lot at any period~$t \in [T]$ is 
used to satisfy both $d_{2n}$ and $d_{2n+1}$ simultaneously. 
Let $I=\{t \in [n] : l^{o}_t = 2n+1-t\}$ be the set of periods where 
production arrives directly at period~$2n+1$ to meet demand~$d_{2n+1}=S$. 
Because$\sum_{t \in I} a_t = S$, the answer to 
\textsc{Partition} is ``yes''.

The inapproximability follows from the fact that any polynomial-time 
approximation algorithm for problem~(\ref{wleadprob}), when applied 
to the instance constructed above, could decide in polynomial time 
whether the objective value is zero, thereby determining whether the 
answer to the  corresponding \textsc{Partition} instance is ``yes''.
\end{proof}

\end{document}